\numberwithin{equation}{section}
\newtheorem{lemma}{Lemma}
\newtheorem{definition}{Definition}
\newtheorem{theorem}{Theorem}
\newtheorem{example}{Example}
\newtheorem{proposition}{Proposition}
\newtheorem{proof}{Proof}
\newenvironment{system}[1][]%
	{\begin{eqnarray} #1 \left\{ \begin{array}{lll}}%
	{\end{array} \right. \end{eqnarray}}
\begin{document}

\newcommand{\calA}{{\cal A}}
\newcommand{\calB}{{\cal B}}
\newcommand{\calF}{{\cal F}}
\newcommand{\calG}{{\cal G}}
\newcommand{\calR}{{\cal R}}
\def\D{\mathcal{D}}
\def\L{\mathcal{L}}
\def\S{\mathcal{S}}
\def\I{\mathcal{I}}
\def\V{\mathcal{V}}
\def\E{\mathcal{E}}
\def\H{$(\mathcal{H})$}
\newcommand{\N}{{\mathbb N}}
\newcommand{\Z}{{\mathbb Z}}
\newcommand{\Q}{{\mathbb Q}}
\newcommand{\R}{{\mathbb R}}
\newcommand{\C}{{\mathbb C}}
\newcommand{\K}{{\mathbb K}}

\newcommand{\Frac}[2]{\displaystyle \frac{#1}{#2}}
\newcommand{\Sum}[2]{\displaystyle{\sum_{#1}^{#2}}}
\newcommand{\Prod}[2]{\displaystyle{\prod_{#1}^{#2}}}
\newcommand{\Int}[2]{\displaystyle{\int_{#1}^{#2}}}
\newcommand{\Lim}[1]{\displaystyle{\lim_{#1}\ }}

\newcommand{\DF}[4]{\Frac{\partial^2 f_{#1 #2}}{\partial u_{#3} \partial u_{#4}}}

\newcommand{\tab}{\hspace*{\fill}}
\newcommand{\bs}{{\backslash}}
\newcommand{\eps}{{\varepsilon}}
\newcommand{\into}{{\;\rightarrow\;}}
\def\Hat{\widehat}
\def\Bar{\overline}
\def\vect{\vec}
\def\fbar{{\bar f}}
\def\xbar{{\bar \x}}

\def\Maple{{\sc Maple}}
\def\RG{{\sc Rosenfeld-Gr\"obner}}

\def \rank{\mathrm{rang}}
\def \ord{\mathrm{ord}}
\def \dim{\mathrm{dim}}
\def \aut{\mathrm{Aut}^{\mathrm{loc}} }
\def \Id{\mathrm{Id}}
\def \Hom{\mathrm{Hom}}
\def \span{\mathrm{span}}
\def \codim{\mathrm{codim}}
\def \Ker{\mathrm{ker}}
\def \coKer{\mathrm{coker}}
\def \diff{\mathrm{Diff}^{\mathrm{loc}} }
\def \diffg{\mathrm{Diff} }
\def \rp{\mathrm{rp}}
\def \card{\mathrm{card}}
\def \leader{\mathrm{ld}}
\def \Esc{\mathrm{Esc}}

\def\J{\mathrm{J}}
\def\x{\mathrm{x}}
\def\a{\mathrm{a}}
\def\d{\mathrm{d}}

\def\GL{\mathrm{GL}}
\def\det{\mathrm{det}}
\def\SL{\mathrm{SL}}
\def\PSL{\mathrm{PSL}}
\def\PGL{\mathrm{PGL}}
\def\O{\mathrm{O}}

\def\gl{\mathfrak{gl}}
\def\g{\mathfrak{g}}
\def\h{\mathfrak{h}}

\def\Bar{\widehat}

\newenvironment{Abstract}%
        {\begin{small}\centerline{\bf Abstract} \begin{quote}}
        {\end{quote} \end{small}}

\newenvironment{iitemize}
    {\begin{list}%
      {--}%
      {\setlength{\parsep}{2.5pt}
       \setlength{\itemsep}{2.5pt}
       \setlength{\topsep}{0pt}
       \setlength{\partopsep}{0pt}
      }
    }
    {\end{list}}


\newenvironment{menumerate}{%
    \renewcommand{\theenumi}{\roman{enumi}}%
    \renewcommand{\labelenumi}{\rm(\theenumi)}%
    \begin{enumerate}} {\end{enumerate}}

\newcommand{\afaire}{$$\vdots$$ \begin{center} {\sc a faire ...} \end{center} $$\vdots$$ }

\newcommand{\pref}[1]{(\ref{#1})}


\newenvironment{meqnarray}%
	{\begin{eqnarray}  \begin{array}{rcl}}%
	{\end{array}  \end{eqnarray}}

\newenvironment{Pmatrix}
        {$ \left( \!\! \begin{array}{rr} } 
        {\end{array} \!\! \right) $}

%

\def\MGMG{M\times G \times \overline M \times \overline G}
\def\RM{\calR(M)}
\def\cRM{\calR^*(M)}
\def\GM{G \times M}
\def\MG{M \times G}

\def\lbar{\overline}
\def\S{\mathcal{S}}
\def\Bar{\widehat}

\newcommand{\fleche}[3]{#1 \stackrel{#2}\longrightarrow #3}
\def\ssi{si et seulement si\ }



\newcommand{\algf}{\sffamily}
\newcommand{\BEGIN}{{\algf begin}}
\newcommand{\END}{{\algf end}}
\newcommand{\IF}{{\algf if}}
\newcommand{\THEN}{{\algf then}}
\newcommand{\ELSE}{{\algf else}}
\newcommand{\ELIF}{{\algf elif}}
\newcommand{\FI}{{\algf fi}}
\newcommand{\WHILE}{{\algf while}}
\newcommand{\FOR}{{\algf for}}
\newcommand{\DO}{{\algf do}}
\newcommand{\OD}{{\algf od}}
\newcommand{\RETURN}{{\algf return}}
\newcommand{\PROCEDURE}{{\algf procedure}}
\newcommand{\FUNCTION}{{\algf function}}
\newcommand{\INDENTER}{{\algf si} \=\+\kill}


\newcommand{\initial}{\mathop{\mathsf{init}}}
\newcommand{\separant}{\mathop{\mathsf{sep}}}
\newcommand{\rem}{\mathop{\mathsf{rem}}}
\newcommand{\quo}{\mathop{\mathsf{quo}}}
\newcommand{\pquo}{\mathop{\mathsf{pquo}}}
\newcommand{\lcoeff}{\mathop{\mathsf{lcoeff}}}
\newcommand{\mvar}{\mathop{\mathsf{mvar}}}
\newcommand{\ld}{\mathop{\mathrm{ld}}}
\newcommand{\prem}{\mathop{\mathsf{prem}}}
\newcommand{\remp}{\mathrel{\mathsf{partial\_rem}}}
\newcommand{\remf}{\mathrel{\mathsf{full\_rem}}}
\renewcommand{\gcd}{\mathop{\mathrm{gcd}}}
\renewcommand{\deg}{\mathop{\mathrm{deg}}}
\newcommand{\pairs}{\mathop{\mathrm{pairs}}}
\newcommand{\dd}{\mathrm{d}}
\newcommand{\ideal}[1]{(#1)}
\newcommand{\cont}{\mathop{\mathrm{cont}}}
\newcommand{\pp}{\mathop{\mathrm{pp}}}
\newcommand{\pgcd}{\mathop{\mathrm{pgcd}}}
\newcommand{\ppmc}{\mathop{\mathrm{ppcm}}}
\newcommand{\init}{\mathop{\mathrm{initial}}}
\newcommand{\NF}{\mathop{\mathrm{NF}}}
\newcommand{\rang}{\mathop{\mathrm{rang}}}
\newcommand{\Fr}{\mathop{\mathrm{Fr}}}

\newcommand{\trdeg}{\mathop{\mathrm{tr~deg}}}
\newcommand{\kk}{{\mathrm k}}

\title{Homological Description of the Quantum Adiabatic  Evolution
With a View Toward Quantum Computations}

\author{Raouf Dridi\footnote{rdridi@andrew.cmu.edu}, \, Hedayat Alghassi\footnote{halghassi@cmu.edu} ,\, Sridhar Tayur\footnote{stayur@cmu.edu} \\ {\small {\sc Quantum Computing Group}}\\
{\small {\sc Tepper School of Business} }  \\
{\small {\sc Carnegie Mellon University,  Pittsburgh, PA 15213}}\\
}

\date{\today}

\maketitle

{\small 

\begin{abstract}
We import the tools of  Morse theory to study quantum adiabatic
evolution, the core mechanism in adiabatic quantum computations (AQC).
AQC is computationally equivalent to the (pre-eminent paradigm) of the Gate
model but less error-prone,  so it is ideally suitable to practically tackle a
large number of important applications. 
AQC remains, however, poorly understood theoretically and its mathematical
underpinnings are yet to be satisfactorily identified.  Through Morse theory, we
bring a novel perspective that we expect will open the door for using
such mathematics in the realm of quantum computations, providing a secure
foundation for AQC. Here we show that the singular homology of a certain cobordism, which
we construct from the given Hamiltonian,
defines the adiabatic evolution.  Our result is based on E. Witten's  
construction for Morse homology that was derived in the very different context
of supersymmetric quantum mechanics. We investigate how such topological
description, in conjunction with   Gau\ss-Bonnet theorem and curvature based reformulation of   Morse lemma,  can be an obstruction to any computational advantage in AQC. {We also explore Conley theory, for the sake of completeness, in advance of any known practical Hamiltonian of interest. We conclude with the instructive case of the ferromagnetic $p-$spin where we show that changing its first order quantum transition (QPT) into a second order QPT, by adding non-stoquastic couplings, amounts to homotopically deform the initial surface accompanied with birth of pairs of critical points. Their number  reaches its maximum when the system is fully non-stoquastic. In  parallel, the total Gaussian curvature gets  redistributed (by the Gau\ss--Bonnet theorem) around the new neighbouring critical points, which weakens the severity of the QPT. }
\end{abstract}
 {\bf Key words:} Quantum adiabatic evolution, Gradient flows, Morse homology, Gau\ss-Bonnet theorem, Dupin indicatrix.


\medskip
\medskip
\medskip

\tableofcontents

\section{Introduction}




Quantum algorithms running on quantum computers promise to solve computational problems that are intractable for classical computers.  
 A salient illustration of this computational supremacy is Shor's algorithm~\cite{Shor:1997:PAP:264393.264406},  which solves prime factorization in polynomial time whereas the best classical algorithm takes an exponential time. 
Grover's algorithm~\cite{Grover:1996:FQM:237814.237866},  which searches a marked item in large unsorted datasets, is another example that comes with quadratic speed-up over classical counterparts.   These two examples were instrumental in the subsequent effervescence around quantum computing. 
 
~~\\
Today, quantum computing research is  dichotomized essentially around  two paradigms:  the gate model \cite{Nielsen:2011:QCQ:1972505} and adiabatic quantum computing (AQC) \cite{Farhi472}. 
While the  gate  model possesses  robust mathematical foundations,  AQC lacks such  foundations and a deep understanding of its power.
This is unfortunate, because AQC  is not only computationally equivalent to the  gate model \cite{1366223, PhysRevLett.99.070502, Kitaev:2002:CQC:863284}, but also  less error-prone \cite{PhysRevA.65.012322, PhysRevA.74.052322}  and much easier to use for a large number of important applications~(binary optimization problems--an AQC processor is, in fact, an optimizer).    

~~\\
The present paper fills that lacuna -- and lays the required foundation -- by offering a completely novel mathematical depiction of AQC based on beautiful mathematics: Morse theory. Our topological investigation unearths hidden mathematical structures underlying AQC's core mechanism: quantum adiabatic evolution.  The arsenal of tools that comes with such mathematics--such as {\it gradient flows, Morse homology and  Gau\ss-Bonnet theorem}--are weapons we deploy  to quantify essential aspects of the adiabatic evolution. 

\subsection{What is a Morse Function? }
A {\it Morse function} is a function whose critical points are non degenerate. Consider, for instance, the real-valued function 
\begin{equation}
	f(s, \lambda) = \lambda^2 - s^2.
\end{equation}
Its graph is  the saddle surface depicted in 
Figure \ref{saddleSurface}. A {\it critical point} of $f$ is a point
$p=(s, \lambda)$, where the gradient of $f$ vanishes; that is, 
a point at which $\partial_sf(p)= \partial_\lambda f(p)=0$, which yields, the point $p=(0,0)$ in this example. This critical point is {\it non degenerate} because the determinant of  the Hessian of $f$ at $p$ is not zero. In fact, the Hessian of $f$ is
given by the matrix
\begin{equation}
\begin{pmatrix}
\partial_{ss}f(s, \lambda) &\partial_{s\lambda}f(s, \lambda)\\
\partial_{s\lambda}f(s, \lambda) &\partial_{\lambda\lambda}f(s, \lambda)
\end{pmatrix}
=
\begin{pmatrix}
2 &0\\
0 &-2
\end{pmatrix}.
\end{equation}
It has two non-zero eigenvalues of opposite signs, in which case the critical point $p$ is called a {\it saddle point}. A non degenerate critical point with strictly negative eigenvalues (of the Hessian) is called a {\it maximum} (or a {\it source}). For instance, $(0,0)$
is a maximum for the function $f(s, \lambda) = -\lambda^2 - s^2$ (the graph of which is the reversed bowl in Figure \ref{hdecomposition} (c)). Similarly, a non degenerate critical point with strictly positive eigenvalues is called a {\it minimum} (or a {\it sink}) (e.g., Figure \ref{hdecomposition} (a)).

    \begin{figure}[h]%
    \centering
    \subfloat  
    {{\includegraphics[width=5cm]{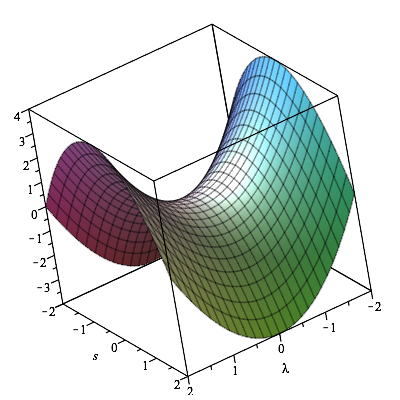} }}%
   
    \caption{{\footnotesize Graph of the function $f(s, \lambda)=\lambda^2-s^2$.  
	}}%
    \label{saddleSurface}%
\end{figure}

\subsection{A Primer on Morse Theory}
Morse theory stems from the observation that important topological properties of
smooth manifolds can be obtained from the critical points of Morse  functions on them. 
To understand this, 
let $M$ be a manifold with--possibly empty--boundaries $M_0$ and $M_1$; a Morse theorist would say
$M$ is a {\it cobordism} from $M_0$ to $M_1$ and denote it by $(M, \partial M)$,  where
$\partial M= M_0\sqcup M_1$.  {{Figure \ref{fig:example}}} gives 
 two examples of cobordisms with two different topologies. The one on the left (tea pot-like) has an empty boundary (i.e., is a closed surface), and the one on the right has a non empty boundary ($M_0$
 is the lower circular boundary and $M_1$ is the disjoint union of the two upper circular boundaries). 
In  {Figure \ref{fig:example}} (left), the height function is Morse and has four non degenerate critical points: a {minimum}, a {saddle point}, and two { maxima}. On the second surface, {Figure \ref{fig:example}} (right), the height function is also Morse, only with one critical point: a saddle point.

~~\\
M. Morse's key observation is that, with the knowledge of critical points,  it is possible to reverse engineer the original topology. The tool for that is the powerful {\it handlebodies decomposition} procedure, which we use repeatedly in this paper.  {Figure \ref{hdecomposition}} provides the dictionary between the critical points and the handles that one can use to recover the cobordism on which the Morse function is defined. For instance, the surface on Figure \ref{fig:example} (left) is recovered by glueing a 0-handle, 1-handle, and  two 2-handles, corresponding to the minimum, the saddle point,   and the two maxima, respectively.

  \begin{figure}[h]
    \centering
    \subfloat  
    {{\includegraphics[width=3cm]{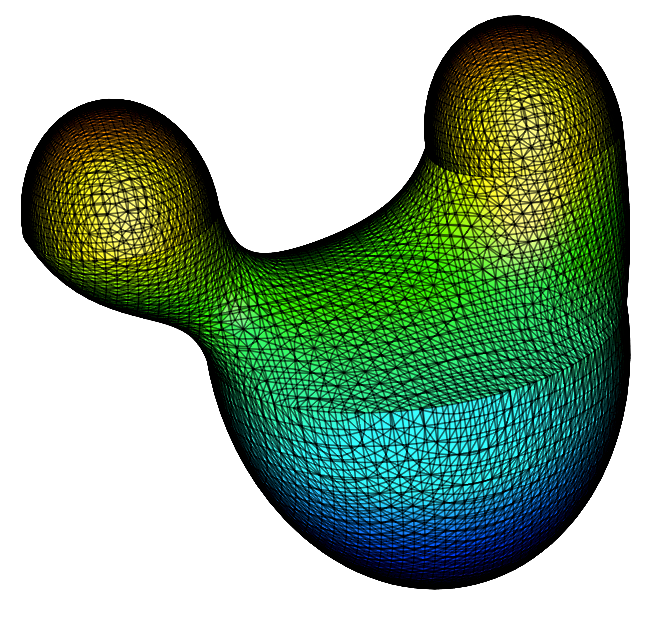} }}%
    \qquad
    \subfloat 
     {{\includegraphics[width=2cm]{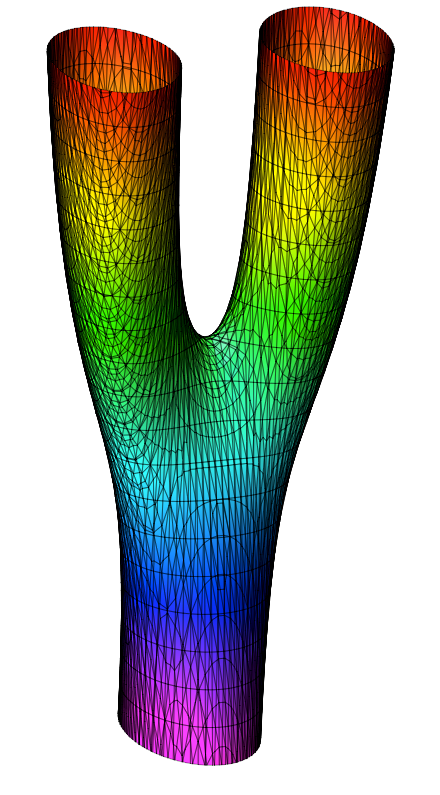} }}%
    \caption{{\footnotesize  (Left) Deformed Sphere. The height function is Morse and has four critical points.   (Right) The so-called pair of pants has a different topology (with Euler characteristic $\chi= \# minima -\# saddles + \# maxima=-1$, compared to 2 for the sphere). The height function is also Morse but has only one critical point. The pair of pants is the cobordism that is assigned to
    Grover's search Hamiltonian. 
	}}%
    \label{fig:example}%
\end{figure}

    \begin{figure}[h]%
    \centering
    \subfloat  [0-handle]
    {{\includegraphics[width=2cm]{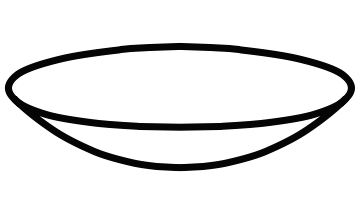} }}%
    \quad
    \subfloat [1-handle]
     {{\includegraphics[width=4cm]{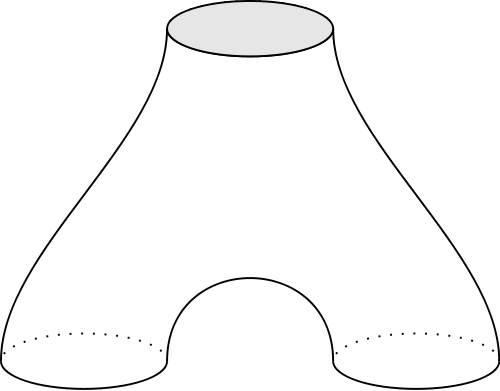} }}%
        \quad
    \subfloat [2-handle]
     {{\includegraphics[width=2cm]{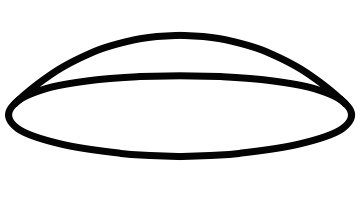} }}%
    \caption{{\footnotesize Handles.  
	}}%
    \label{hdecomposition}%
\end{figure}

 
 ~~\\
In the language of homology,  the relation between the topology of~$M$ and the critical points is expressed 
with the so-called Morse inequalities, where the Betti number $\beta_i$ of the singular homology of~$M$ gives an upper bound for the number of critical points of index~$i$.
(A good reference for the algebraic topological notions employed here -- such as homology, Betti numbers, and Euler characteristic -- is the excellent book \cite{Hurtubise} which is also our main reference for Morse theory). In the example of the closed surface of {Figure  \ref{fig:example}}, one has $\beta_1=0$ (since the surface is a continuous deformation of the 2 dimensional sphere  and thus contains no hole), which indicates that the saddle point can be canceled with a better choice of Morse function (e.g., by projecting onto the horizontal axis). The \textit{pair of pants},
 however, has $\beta_1=2$, indicating that the number of saddle points can not exceed 2   (which can be attained by bending towards the top the dangling leg).    

~~\\
 The full power of Morse theory was unleashed by R. Thom and S. Smale  (the latter with his work on the {Poincar\'e} conjecture) in the sixties and  subsequently, in the eighties, by E. Witten. 
 S. Smale introduced the dynamical system point of view  which has played a central role since then; in particular, in  Witten's explicit construction of the Morse complex \cite{witten1982}.   To the Morse function $f$ we assign the (downward) gradient
 flow given by 
 \begin{equation}\label{gfintro}
 x'(\tau)=- (\nabla f)\circ x(\tau)
 \end{equation}
 for smooth curves $x(\tau)\in (M, \partial M)$.  The  trajectories (also called  instantons ) define a complex whose homology turns out to be isomorphic to the singular homology of the given manifold. This is the apex result of Morse theory, known as the Morse homology theorem, which plays a central role in this paper. 
 
 ~~\\
{The consideration of degenerate critical points started with 
C. Conley in~\cite{0821816888}.} His observation  is that many of the constructions can be carried out with a less local approach. In particular, the behaviour of trajectories around a critical point is indicative of its nature. Therefore, it suffices to consider isolated neighbourhoods around critical points. Within such neighbourhood, continuous deformations can split the degenerate critical into a set of non degenerate points without changing the dynamics around the isolated neighbourhood. The interest is then on how trajectories connect the  isolated neighbourhoods of the function $f$.{We briefly explore this towards the end of the paper. }   

\subsection{Goal of The Paper and Summary of The Results}
In the  present paper we 
import the  tools of Morse theory to study quantum adiabatic evolution, where we  describe topologically, the {\it adiabatic solutions}    of the Schr\"odinger equation (with $\hbar$ is set to 1): 
\begin{equation}\label{shrodinger}
	\mathrm{i}  \Frac{\partial }{\partial t} |\varphi (t)\rangle = H(t) |\varphi (t)\rangle.
\end{equation}
Adiabatic solutions, central to AQC, are solutions obtained after passing to the slow regime ($s= \varepsilon t$ with $0<\varepsilon <1$) at the singular limit $\varepsilon\mapsto 0$. Here $H(t)$ is 
  some Hamiltonian of interest  operating on the Hilbert space~$\mathcal H ={\mathbb C^2}^{\otimes n}$ -- we assume usual smoothness assumptions on $H(t)$ and $H'(t)$ and their spectral projections.  
  
 ~~\\
 Our topological expedition starts with the consideration
of the function
\begin{equation}
f(s, \lambda) = det(H(s)-\lambda I), 
\end{equation}
that is, the characteristic polynomial of the Hamiltonian $H(s)$ in the slow regime. Our results are ramifications of this choice. In particular,  if  $f$ is Morse,  then  Morse theory is applied without difficulty,  resulting in a number of direct topological reformulations of the quantum mechanical objects. Indeed, 
we automatically inherit  a gradient flow given by the Morse function $f$. This gradient flow  is defined on  a cobordism $(M, \partial M)$ that is also  straightforwardly inherited  from the critical points of $f$ using the handlebodies decomposition.  The eigenenergies (energies of the adiabatic solutions),  which are defined by $f(s, \lambda)=0$,  manifest themselves  as level sets of  the gradient flow. 
As a matter of fact, in the light of the Morse homology theorem, these level sets  are subjected to the singular homology of the cobordism
$(M, \partial M)$: given the set of the critical points of $f$, the Morse homology theorem yields a ``procedure"
that connects the   critical points  consistently with the singular homology of the cobordism $(M, \partial M)$.  These connections are the flow trajectories, and by orthogonality, we obtain the level sets,  in particular, the eigenenergies. We extend these results
to degenerate critical points in the case of the so-called $k-$fold saddle points. Conley theory can be applied, as well, without major difficulties, leading to same conclusions as in the non degenerate case. 

~~\\ 
Our voyage doesn't end here. We start our descent from the topological global description to a  {\em local} description of the quantum adiabatic evolution around the critical points of $f$.  Essential to our local description is the curvature at the critical points: two Hamiltonians having the same topology are not necessarily ``computationally equivalent" i.e., have different speedups. We use differential geometry to obtain the quantitative behaviour of the eigenenergies around the critical points. In fact, given the homology of the cobordism $(M, \partial M)$,  Gau\ss-Bonnet theorem distributes the Gaussian curvature of~$(M, \partial M)$ consistently with this homology --but not necessarily in the same way for topologically equivalent Hamiltonians. By re-expressing  Morse lemma in terms of the principal curvatures,  one can  obtain the delay factor at the given critical point.   We explain the role of the shape operator to describe the adiabatic evolution -- including the quantitative behaviour of the eigenenergies. In fact, since the shape operator is Hermitian (self-adjoint), one might think of this procedure as {\it dimensionality reduction} of the original Hamiltonian $H(s)$. 

 \subsection{An Illustrative Example: Quantum Search} 
As an introductory example, let us consider the adiabatic Hamiltonian for the search problem~\cite{cerf, vazirani}:
\begin{equation}\label{HGroverIntro}
	H(s)  = (1-s) H_{initial} + sH_{final},
\end{equation}
where $s=t/T$, with $T>0$, and 
\begin{eqnarray}
H_{initial}  &=&  1 -  |\hat 0\rangle \langle \hat 0|,  \\[3mm]
H_{final}  &=&  \sum_{z\in \{0, 1\}^n - \{ u\} }   |z\rangle \langle z| = 1 -  |u\rangle \langle u|. 
\end{eqnarray}
 As usual, the notations $ \{|z\rangle\}_{z\in \{0, 1\}^n}$ and  $\{|\hat z\rangle\}_{z\in \{0, 1\}^n}$ stand for the computational and Hadamard bases, respectively. The state $ |u\rangle$ is the sought item (the unsorted database being  the computational basis $ \{|z\rangle\}_{z\in \{0, 1\}^n}$). The search problem can be put into 
    the two-dimensional subspace spanned by the two  states   $|v\rangle := |\hat 0\rangle-  1/\sqrt{N} |u\rangle$ and $|u\rangle$, with $N=2^n$. In this orthogonal basis, the restricted Hamiltonian $H(s)$ takes the form 
\begin{equation}
H(s)=  
\begin{pmatrix} 
{\frac { \left( 1-s \right)  \left( N-1
 \right) }{N}}&{\frac { \left( -1+s \right)  \left( N-1 \right) }{{N}^
{3/2}}}\\ \noalign{\medskip}{\frac {s-1}{\sqrt {N}}}&{\frac {1-s+sN}{N}
} 
\end{pmatrix}.
\end{equation}
%
The characteristic polynomial of this $2\times 2$  matrix is 
\begin{equation}\label{MorseGrover}
	f(s, \lambda) = {\frac { \left( N-1 \right) }{{N}}} (s-s^2) + \lambda^2-\lambda,
\end{equation}
and has one critical point $p$ obtained by equating its partial derivatives to zero. This critical point
is non degenerate because the eigenvalues 
\begin{equation}
{k_1(p) = -2\,{\frac { \left( N-1 \right) }{{N}}}}\, \mbox{ and } {k_2(p) =2}    
\end{equation}
 of the Hessian of $f$ are non zero, and because $k_1(p) k_2(p)<0$, the critical point is a saddle point. Now, the graph of the function $f$ (see Figure \ref{saddleSurface}) comes with a Gaussian curvature $K(s, \lambda)=(f_{ss}f_{\lambda \lambda}-f_{\lambda}^2)/(1+f_s^2+f_\lambda^2)^2$.  Gau\ss-Bonnet theorem forces this  curvature~to distribute itself on the surface consistently with this topology (consistent with Euler characteristic~{-1}). 
In fact, the curvature is ``dumped" at the critical point $p$:  
\begin{equation}
	\int_{(M, \partial M)} K(s, \lambda) d\sigma \sim \int_{V(p)} K d\sigma  = -2\pi +   O(1/N),
\end{equation}
where $V(p)$ is an arbitrary small neighbourhood  around the saddle point~$p$, and  independent of $n$. Explicitly, 
we have, $K(p)=k_1(p)k_2(p) = -4(1-\frac{1}{N})$ and the two quantities $k_1(p)$ and $k_2(p)$ are the two principal curvatures of the saddle surface at $p$.  
If  we intersect the  surface with planes horizontal to the tangent plane  $T_pM$, in particular the plane $f(s, \lambda)=0$, we obtain two hyperbola called {\it Dupin indicatrix}. The  radius $g(s)$ of this indicatrix (that is, the distance  between the two hyperbola), which is also the energy difference, 
is constrained by the amount of curvature at $p$. Indeed, we have  
\begin{equation}
	g(s) = 2 {\frac {\sqrt {-k_{{2}}(p) \left( 2{  f(p)}+k_{{1}}(p){(s-\frac{1}{2})}^{2} \right) }}{k
_{{2}}(p)}},
\end{equation}
and from which we infer:  
\begin{equation}
	\int_0^1 \frac{ds}{g(s)^2}  =   
    {\frac {-k_{{2}}(p)}{\sqrt {k_{{1}}(p) {{  f(p)}}}}}
    \arctan \left(  {\frac {\sqrt {k_{{1}}(p)} }{\sqrt {{
 8 f(p)}}}} \right)  
 =
 \frac{\pi}{2} \,\sqrt {N}-1+O \left( {\frac {1}{\sqrt {N}}} \right),
\end{equation}
which is the total time needed to tunnel through the saddle point $p$ without destroying the adiabaticity. 
Our approach thus provides a new derivation for quadratic speedup.

\subsection{Outline of The Paper}
The paper is organized as follows. Section 2 reviews the adiabatic theorem. The version we review is valid not only for both discrete and continuous spectrum, but also  in the presence of the eigenvalues crossing. In section 3, we summarize properties of gradient flows and review  Morse lemma. Section 4 connects the quantum adiabatic evolution to gradient flows. Section 5 has two parts: (1) global description of the quantum  adiabatic evolution based on Morse homology, and~{(2) (local)} description of the Gaussian curvature around the critical points.  Gau\ss-Bonnet theorem bridges  the two parts. Section  6 generalizes the findings of Section 5 to the degenerate case -- specifically, to the $k$-fold saddle points. {Section 7 is dedicated to the ferromagnetic $p-$spin model. This instructive model exhibit a first order quantum phase transition that can be changed into
a second order transition by the addition of non-stoquastic couplings. We describe this as a homotopy deformation accompanied by death and birth of critical points. This number is maximized when the system is fully  non-stoquastic.}

~~\\

~~\\

\begin{center}
\begin{table}[h]
\begin{tabular}{c|c}
{\bf Quantum adiabatic evolution} & {\bf Morse theory }\\
\hline \\
Hamiltonian $H(s)$ & Morse function $f(s, \lambda)=det(H(s)-\lambda I)$\\
\hline\\
Eigenenergies & Level sets of the gradient flow:\\
& $x'(\tau)=- (\nabla f)\circ x(\tau)$\\
\hline\\
Eigenenergies & are orthogonal to the boundary maps\\
& of the Morse homology\\
\hline\\
Spectral gap & Radius of Dupin Indicatrix\\
\hline\\
Dimensionality reduction of $H(s)$ & The shape operator\\
\hline\\
Degeneracy & $k-$fold saddle points\\
\hline
\end{tabular}
\caption{Correspondence between the quantum  adiabatic evolution and Morse theory.}
\end{table}
\end{center}

%
%
%
\section{The Adiabatic Theorem}
The  adiabatic theorem is an existence result of solutions of the Schr\"odinger equation that goes back to the early days of 
quantum mechanics.
 It describes both 
the solutions and the regime (i.e., conditions) in which such  solutions exist. Physically, this regime is characterized by ``slowly" varying the time dependent Hamiltonian.
Mathematically,  this is done by considering the Schr\"odinger equation 
\begin{equation}\label{shrodinger2}
	\mathrm{i} \Frac{\partial }{\partial t} |\varphi (t)\rangle  = H(t) |\varphi (t)\rangle,
\end{equation} 
with the new (slow) time $s= \varepsilon t$ with $0<\varepsilon <1$ which yields:
\begin{equation}\label{shrodingerWithS}
	\mathrm{i} \varepsilon \Frac{\partial }{\partial s} |{\varphi}_\varepsilon (s)\rangle  
		=  H(s) |{\varphi}_\varepsilon(s)\rangle.
\end{equation} 
 The adiabatic theorem describes  solutions at the singular limit $\varepsilon\rightarrow 0$. 
 In the literature,   
there is no single adiabatic theorem, and different theorems focus on  different assumptions on the Hamiltonian (\cite{Born1928, 
Kato, MR1027662, Avron1999} to cite a few). They do, however,  share  the following
structure \cite{Avron1999}: 
\begin{theorem}
Let $P(s)$ be a spectral projection of $H(s)$. In the singular limit~${\varepsilon\rightarrow 0}$, the solution
$|{\varphi}_\varepsilon (s)\rangle$ of (\ref{shrodingerWithS}) with the initial condition 
$|{\varphi}_\varepsilon (0)\rangle \in Range\left(P(0)\right)$ is subject to 
\begin{equation}\label{adiabaticThm}
	dist \left( |{\varphi}_\varepsilon (s)\rangle, \,  Range\left( P(s)\right)\right) \leq O(\varepsilon^\gamma) 
\end{equation}
for an appropriate value of $\gamma\geq 0$ depending on the assumptions made about  $H(s)$. 
\end{theorem}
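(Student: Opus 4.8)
The plan is to reproduce, in a form robust enough to accommodate eigenvalue crossings, the classical construction of Kato: compare the physical Schr\"odinger propagator with an auxiliary \emph{adiabatic propagator} that transports $Range(P(0))$ exactly onto $Range(P(s))$, and show the two differ by $O(\varepsilon)$. Write $U_\varepsilon(s)$ for the unitary propagator solving $i\varepsilon\,\partial_s U_\varepsilon(s)=H(s)U_\varepsilon(s)$, $U_\varepsilon(0)=I$ (its existence guaranteed by the smoothness hypotheses on $H$ together with the standard theory of time-dependent propagators), so that $|\varphi_\varepsilon(s)\rangle=U_\varepsilon(s)|\varphi_\varepsilon(0)\rangle$. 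First I would introduce Kato's generator $K(s)=P'(s)P(s)-P(s)P'(s)$, which is anti-self-adjoint, and the adiabatic propagator $U_A(s)$ solving $i\varepsilon\,\partial_s U_A(s)=\big(H(s)+i\varepsilon K(s)\big)U_A(s)$, $U_A(0)=I$ --- a unitary, since $H(s)$ is self-adjoint and $K(s)$ anti-self-adjoint. The first key step is the \emph{intertwining identity} $P(s)U_A(s)=U_A(s)P(0)$: differentiating $U_A(s)^{-1}P(s)U_A(s)$ and using $[H(s),P(s)]=0$ leaves the purely algebraic relation $[K(s),P(s)]=P'(s)$, a consequence of $P(s)^2=P(s)$, so the derivative vanishes and $U_A$ carries $Range(P(0))$ onto $Range(P(s))$ for every $s$. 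Consequently $U_A(s)|\varphi_\varepsilon(0)\rangle\in Range(P(s))$, and it suffices to estimate $\|U_\varepsilon(s)-U_A(s)\|=\|I-U_\varepsilon(s)^{*}U_A(s)\|$.

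The second step is the comparison estimate. A short computation gives $\partial_s\big(U_\varepsilon(s)^{*}U_A(s)\big)=U_\varepsilon(s)^{*}K(s)U_A(s)$, which by itself is only $O(1)$; the gain of a power of $\varepsilon$ comes from integrating by parts in $s$. Since $K(s)$ is off-diagonal with respect to the splitting $Range(P(s))\oplus Range(I-P(s))$, the Sylvester equation $[H(s),X(s)]=K(s)$ is solvable, with $X(s)$ built from the reduced resolvent $S(s)$ of $H(s)$ relative to $P(s)$ and $\|X(s)\|\lesssim\|P'(s)\|/g(s)$, where $g(s)>0$ is the spectral gap separating the portion of the spectrum of $H(s)$ carried by $P(s)$ from its complement. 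Substituting $K=[H,X]$, then $U_\varepsilon^{*}H=-i\varepsilon\,\partial_s U_\varepsilon^{*}$ and $HU_A=i\varepsilon\,\partial_s U_A-i\varepsilon KU_A$, rewrites the integrand as $-i\varepsilon\,\partial_s\big(U_\varepsilon^{*}XU_A\big)+i\varepsilon\,U_\varepsilon^{*}(X'+XK)U_A$, and integration from $0$ to $s$ yields
\[
U_\varepsilon(s)^{*}U_A(s)-I \;=\; -i\varepsilon\,\big[\,U_\varepsilon^{*}XU_A\,\big]_{0}^{s} \;+\; i\varepsilon\int_{0}^{s}U_\varepsilon^{*}\,(X'+XK)\,U_A\,d\sigma .
\]
Bounding the right-hand side by unitarity of $U_\varepsilon$ and $U_A$ gives $\|U_\varepsilon(s)-U_A(s)\|\le\varepsilon\,C(s)$, with $C(s)$ controlled by $\sup\|X\|$, $\sup\|X'\|$ and $\sup\|K\|$ over $[0,s]$; these are finite under a uniform spectral gap --- which is precisely where the gap hypothesis enters, through $\|S(s)\|\le g(s)^{-1}$. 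Hence $dist\big(|\varphi_\varepsilon(s)\rangle,\,Range(P(s))\big)\le\|\,(U_\varepsilon(s)-U_A(s))|\varphi_\varepsilon(0)\rangle\,\|=O(\varepsilon)$, i.e.\ (\ref{adiabaticThm}) holds with $\gamma=1$. When $H(s)$ has continuous spectrum bounded away from the band of $P(s)$, the same computation goes through once $P'(s)$, $S(s)$ and $S(s)P'(s)$ are checked to be bounded operators --- which is where the stated smoothness assumptions on $H(s)$, $H'(s)$ and their spectral projections are used.

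The third, and genuinely delicate, step is the case of eigenvalue crossings, where $g(s)\to 0$ and the reduced resolvent is no longer uniformly bounded. Assuming the crossing set $\{s_1,\dots,s_m\}$ is finite, I would cover each $s_j$ by an interval $I_j(\varepsilon)$ of width $\delta(\varepsilon)\to 0$, use the crude bound $\|U_\varepsilon-U_A\|\le 2$ on $\bigcup_j I_j(\varepsilon)$, and run the integration-by-parts estimate of the second step on the complement, where $g(s)$ is bounded below by a power of $dist(s,\{s_j\})\ge\delta(\varepsilon)$. This produces contributions of order $\delta(\varepsilon)$ and $\varepsilon\,\delta(\varepsilon)^{-\kappa}$, with $\kappa$ governed by how fast $X$ and $X'$ blow up near a crossing, and optimizing $\delta(\varepsilon)$ balances them to give (\ref{adiabaticThm}) with some $0<\gamma<1$ whose value is dictated by the order of vanishing of the gap at the crossings. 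I expect this partitioning step to be the main obstacle: the uniform resolvent bounds valid up to the crossing set, the choice of partition scale, and hence the precise dependence of $\gamma$ on the structure of $H(s)$, are all concentrated there. When no gap survives even away from isolated points, one falls back on the gap-free theorem of Avron--Elgart: mollify $P(s)$, rerun the comparison, and recover only $dist\big(|\varphi_\varepsilon(s)\rangle,\,Range(P(s))\big)=o(1)$, i.e.\ the borderline case $\gamma=0$ of the statement.
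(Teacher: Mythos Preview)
Your proposal is correct and follows the Kato intertwiner method faithfully: you build the adiabatic propagator $U_A$ from the generator $K=P'P-PP'$, establish the intertwining $P(s)U_A(s)=U_A(s)P(0)$, and extract the factor $\varepsilon$ by solving the Sylvester equation $[H,X]=K$ and integrating by parts. The handling of crossings via a partition-of-scale argument and the fallback to Avron--Elgart for the gapless case are also appropriate.

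The paper, however, does not prove this theorem in your sense; it quotes it as a structural summary of the literature (Born--Fock, Kato, Hagedorn, Avron--Elgart) and then \emph{illustrates} only the simple-discrete-spectrum case by the original Born--Fock ansatz: one writes $|\varphi_\varepsilon(s)\rangle=e^{\phi(s,\varepsilon)}|\varphi^i(s)\rangle$, substitutes into the Schr\"odinger equation, reads off the dynamical phase, and then expands to first order in $\varepsilon$ to exhibit the correction $|\zeta(s)\rangle$ with its $(\lambda^j-\lambda^i)^{-2}$ denominators. The crossing exponent $\gamma=1/(m+1)$ is simply asserted from $|\lambda^j-\lambda^i|\sim s^m$ near the crossing. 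So your route is genuinely different: Kato's intertwiner works directly at the level of spectral projections (hence covers degenerate bands and continuous spectrum uniformly) and gives operator-norm bounds on $U_\varepsilon-U_A$, whereas the paper's Born--Fock sketch is eigenvector-by-eigenvector, more explicit about the phase and the form of the leading correction, but tied to simple discrete spectrum. Your approach buys generality and rigor; the paper's buys transparency about where the gap enters and what the first-order correction looks like, which is all the paper needs for the subsequent Morse-theoretic discussion.
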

This formulation is particularly interesting not only  because it is valid for discrete and continuous spectrum, but also because it continues to hold in the event of an eigenvalues crossing.  Recall that eigenvalues crossing refer to when $H$ has two eigenvalues $\lambda^i$ and~$\lambda^j$ 
 that are isolated from the rest of the spectrum and  
  equal to each other at some time $s$. The order of  such a crossing is by definition the order of the zero of 
$\lambda^i(s)-\lambda^j(s)$.

~~\\
In their original statement of the adiabatic theorem, 
 Fock and Born \cite{Born1928}  restricted themselves to Hamiltonians with simple discrete spectrum. They have showed that  if the eigenvalues crossing doesn't happen, then  $\gamma\geq 1$; that is, the gap is a smooth function of $\varepsilon$. 
 To see this, suppose
the  spectral projection 
$P(s)$ is given by $|{\varphi}^i (s)\rangle \langle {\varphi}^i (s)|$ and consider 
the {\it ansatz} 
\begin{equation}
	|{\varphi}_\varepsilon (s)\rangle = \mathrm{exp} \left\{\phi(s, \varepsilon) \right\} |{\varphi}^i (s)\rangle
\end{equation}
with  the complex-valued function $\phi$   required to satisfy $\phi(0, \varepsilon)=0$. 
The substitution in the equation~(\ref{shrodingerWithS}) yields
$$   
	\mathrm{i} \varepsilon \partial_s |{\varphi}^i (s)\rangle +
	\mathrm{i} \varepsilon \partial_s \phi(s, \varepsilon) |{\varphi}^i (s)\rangle = \lambda^i(s)   
	|{\varphi}^i (s)\rangle.
$$
Assuming $\varepsilon$ is close to 0, the first term of the left hand side can be neglected. Integrating gives
$$
	\phi(s, \varepsilon) =  -\frac{\mathrm{i}}{\varepsilon}\int_{0}^{s} \lambda^i(s) ds + O(\varepsilon). 
$$	
Thus,
\begin{equation}\label{adiabaticSol0}
 |\varphi_\varepsilon(s)\rangle =
	  \mathrm{exp}\left\{ -\frac{\mathrm{i}}{\varepsilon}\int_{0}^{s} \lambda^i(s) ds\right\} |\varphi^i (s)\rangle + O(\varepsilon). 
\end{equation}
To see where the crossing comes in, we need to expand this   approximation to the first order. 
Plugging 
\begin{equation}\label{plugging2}
 |\varphi_\varepsilon(s)\rangle=\mathrm{exp}\left\{ -\frac{\mathrm{i}}{\varepsilon}\int_{0}^{s} \lambda^i(s) ds \right\} \left( |\varphi^i (s)\rangle + \varepsilon |\zeta(s)\rangle + O(\varepsilon^2) \right)
 \end{equation}
in the Schr\"odinger
and solving for the function $ |\zeta(s)\rangle$  yields  
\begin{eqnarray}\label{zeta}
	 |\zeta(s)\rangle = \sum_{j\neq i} 
	 		  \left(\frac{A(0)}{(\lambda^j(0) -\lambda^i(0))^2} - \mathrm{exp}\left\{\frac{\mathrm{i}}{\varepsilon}  (\lambda^j(s) - \lambda^i(s))\right\} \frac{A(s)}{(\lambda^j(s) -\lambda^i(s))^2} \right)|\varphi^j (s)\rangle 
\end{eqnarray}
with 
\begin{equation}
A(s)=  \mathrm{i} \langle \varphi^i(s) | \frac{d}{d s}H(s) | \varphi^j(s)\rangle.
\end{equation}
 The term of interest is the denominator~${(\lambda^j(s) -\lambda^i(s))^2}$, which is not defined   when eigenvalues crossing occurs -- which is also true for the higher terms in $ O(\varepsilon^2)$. If the eigenvalues crossing is excluded,  the expression \pref{plugging2},  with $ |\zeta(s)\rangle$ given by~(\ref{zeta}), is a well defined solution.

 \subsection{Eigenvalues Crossings and AQC}
As a matter of fact,  in their original paper,
Born and Fock also studied  eigenvalues crossings and obtained  expansions similar
to  (\ref{adiabaticSol0}), with $O(\varepsilon)$ replaced with $O(\varepsilon^{1/(m+1)})$, with $m$ the order of the crossing 
\begin{equation}\label{adiabaticSolm}
	|\varphi_\varepsilon(s)\rangle =
	  \mathrm{exp}\left\{ -\frac{\mathrm{i}}{\varepsilon}\int_{0}^{s} \lambda^i(s) ds\right\} |\varphi^i (s)\rangle  + O(\varepsilon^{1/(m+1)}).
\end{equation}
The source of 
this fractional power is that near the crossing point ${|\lambda^j(s) -\lambda^i(s)| = s^m}$ (See also  \cite{MR1027662}).  That being said, in the context of current implementations of AQC,
 where the time dependent Hamiltonian $H(s)$ has a particular form (as in \pref{HGroverIntro}),  the eigenvalues crossing doesn't occur if  the time $s$ is the only parameter that drives the evolution. 
 This is a corollary of the so-called {\it non-crossing rule} (or avoided crossing rule) which is another important early result, due to von Neumann and Wigner 
\cite{vonNeumann1993}.  The proof can be found in \cite{Suzuki2013} or in the original paper~\cite{vonNeumann1993}. 

\begin{theorem}[Non-crossing rule]
Suppose the Hamiltonian $H$ depends on a number of independent real parameters, and
suppose that $H$ has $r$ eigenvalues with multiplicity $m_i$ for $i=1,\cdots, r$. 
Assume that no observable commutes with the Hamiltonian. In this case,  
 the number of the free real parameters to fix $H$ is $2^{2n}+r -\sum_{i-1}^rm_i^2$. 
\end{theorem}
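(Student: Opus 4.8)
The plan is to prove the statement by a dimension count on Hermitian matrices. Write $N=2^n=\dim\mathcal H$, so that a Hamiltonian on $\mathcal H$ is a Hermitian $N\times N$ matrix and an observable is likewise a Hermitian $N\times N$ matrix. First I would record the unconstrained count: the real vector space of all Hermitian $N\times N$ matrices has $N$ real diagonal entries and $N(N-1)/2$ complex super-diagonal entries, hence $N+2\cdot N(N-1)/2=N^2=2^{2n}$ real parameters. This is the ambient space, and the locus of Hamiltonians with a prescribed degeneracy pattern will be a submanifold of it whose dimension is exactly the ``number of free real parameters to fix $H$'' in the statement.

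Next I would fix the spectral data: distinct eigenvalues $\lambda_1,\dots,\lambda_r$ with multiplicities $m_1,\dots,m_r$, $\sum_i m_i=N$, and first keep the $\lambda_i$ frozen. Every Hermitian $H$ with that spectrum is $H=UDU^\dagger$ with $D=\operatorname{diag}(\lambda_1 I_{m_1},\dots,\lambda_r I_{m_r})$ and $U\in U(N)$, and conversely; two unitaries give the same $H$ exactly when $U^{-1}U'$ commutes with $D$, i.e.\ lies in the block-diagonal subgroup $U(m_1)\times\cdots\times U(m_r)$ stabilising the eigenspaces. Here the hypothesis that no (nontrivial) observable commutes with $H$ enters: it is what fixes the relevant symmetry group to be the full unitary group $U(N)$ rather than a proper subgroup (for instance an antiunitary time-reversal symmetry would replace $U(N)$ by an orthogonal or symplectic group and alter the count). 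Orbit--stabiliser for the smooth conjugation action of the compact group $U(N)$ then identifies the set of Hermitian matrices with spectrum $D$ with the (generalised) flag manifold $U(N)/\bigl(U(m_1)\times\cdots\times U(m_r)\bigr)$, of real dimension
\[
\dim U(N)-\sum_{i=1}^r \dim U(m_i)=N^2-\sum_{i=1}^r m_i^2 .
\]

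Finally I would let the eigenvalues vary. The $r$ real numbers $\lambda_1,\dots,\lambda_r$ contribute $r$ further independent parameters, transverse to the flag directions, so the total number of free real parameters needed to pin down $H$ within the stratum of matrices with multiplicity pattern $(m_1,\dots,m_r)$ is
\[
r+N^2-\sum_{i=1}^r m_i^2 \;=\; 2^{2n}+r-\sum_{i=1}^r m_i^2 ,
\]
which is the claimed formula. As a corollary one reads off the codimension of this stratum inside the $2^{2n}$-dimensional space of Hamiltonians, namely $\sum_i m_i^2-r$; for the first nontrivial degeneracy ($r=N-1$ with a single $m_i=2$) this is $3$, so a generic one-parameter family $s\mapsto H(s)$ misses it, which is the non-crossing statement used in the previous subsection.

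The main obstacle I anticipate is making the orbit--stabiliser step genuinely rigorous rather than heuristic: one must check that the conjugation action of $U(N)$ on the fixed-spectrum level set is transitive with stabiliser exactly $\prod_i U(m_i)$, and that the orbit map $U(N)\to U(N)/\prod_i U(m_i)$ is a principal bundle, so that dimensions subtract cleanly and the quotient is a smooth manifold of the stated dimension. A secondary subtlety is bookkeeping: one should decide that the $(m_i)$ are fixed labels (treating the $\lambda_i$ as unordered does not change the local dimension but only complicates the statement), and one should verify that the $r$ eigenvalue directions are honestly independent of the flag directions, i.e.\ that $(D,U)\mapsto UDU^\dagger$ is a submersion away from the walls where two $\lambda_i$ coincide, so that the two counts simply add.
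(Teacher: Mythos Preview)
The paper does not supply its own proof of this theorem; it simply refers the reader to Suzuki's textbook and to the original von~Neumann--Wigner paper. Your dimension count---identifying the isospectral stratum with the flag manifold $U(N)/\prod_{i}U(m_i)$ via the orbit--stabiliser theorem and then adjoining the $r$ eigenvalue parameters---is precisely the classical argument given in those references, and it is correct. Your reading of the hypothesis ``no observable commutes with $H$'' as the absence of an extra symmetry that would cut the conjugation group from $U(N)$ down to a block-diagonal subgroup is also the intended one.
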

 Suppose the Hamiltonian is given by 
 \begin{equation}\label{HGroverIntro}
	H(s)  = \alpha(s) H_{initial} + \beta(s)H_{final}
\end{equation}
where $H_{initial} $  and $H_{final}$ are non commuting observables, and 
$\alpha(s)$ and $\beta(s)$ are two functions that control the evolution (e.g., $\alpha(s)=1-s$ and $\beta(s)=s$). Suppose $s_0 < s_1$ are  two different times such that for all $s\in [s_0, s_1]$ both
functions $\alpha(s)$ and $\beta(s)$ are not zero. The theorem above excludes the  scenario where $H(s_0)$ has a simple spectrum  whilst  $H(s_1)$ has a doubly degenerated eigenenergy.  To see this,  it suffices to count the number of the free real parameters in each case (that is, for $H(s_0)$ and $H(s_1)$) to see that  three parameters  in $H(s)$ are  to be fixed to obtain $H(s_1)$ (fixing $s$ to $s_1$ in addition to two more parameters). Since $\alpha(s)$ and $\beta(s)$ depend only on $s$, such a scenario can not happen. However, when more parameters (at least two more) are involved, the eigenvalues crossing is plausible. 


%
%

\section{Morse Lemma and Gradient Flows}
  Let  $M$ be a cobordism between $M_0$ to $M_1$ and 
 $f:(M, \partial M)\rightarrow [0,1]$   a smooth function.
  
 \begin{definition}
 Let $p\in M$ be a non degenerate critical point of $f$.  
The  number  $\gamma(p)$  of negative eigenvalues of the Hessian of $f$ at $p$  is called the Morse index of~$f$ at~$p$. 
\end{definition}
 The Morse index is an intrinsic invariant for the homeomorphism group of~$M$ (i.e.,  one-to-one mappings on $M$ continuous in both directions).
 
\begin{example}
In the 2-dimensional case where the Hessian is a 2 by 2 matrix,  
 the index  is necessarily in $\{0, 1, 2\}$. 
\end{example}
Around its non degenerate critical points, the function $f$ takes a very simple form:  
\begin{lemma}[Morse lemma]
	Let $p$ be a non degenerate critical point of a smooth function $f: M\rightarrow [0, 1]$.  There exists a neighbourhood $U$
	of $p$ and a diffeomorphism~${h: (U, p)\rightarrow (\mathbb R^m, 0)}$ such that
	$$
		f\circ h(x_1, \cdots, x_m) = f(c)  - \sum_{1\leq i\leq \gamma(p)} x^2_i + \sum_{ \gamma(p)+1\leq i\leq m} x^2_i,
	$$
	with $m= \mathrm{dim}(M)$.
\end{lemma}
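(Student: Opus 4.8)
The plan is to follow Morse's original completing-the-square argument combined with Hadamard's lemma, working entirely in a local chart around $p$ (which we may take to lie in the interior of $M$). First I would choose smooth coordinates identifying a neighbourhood of $p$ with a neighbourhood of $0$ in $\mathbb{R}^m$, with $p\leftrightarrow 0$, and treat $f(p)$ as an additive constant. Applying Hadamard's lemma twice --- writing $f(x)-f(p)=\sum_i x_i g_i(x)$ with $g_i(x)=\int_0^1\partial_i f(tx)\,dt$, noting $g_i(0)=\partial_i f(0)=0$ because $p$ is critical, and expanding each $g_i$ the same way --- produces smooth functions $b_{ij}$ with
\[
f(x)-f(p)=\sum_{i,j} x_i x_j\, b_{ij}(x),
\]
which after replacing $b_{ij}$ by $\tfrac{1}{2}(b_{ij}+b_{ji})$ satisfy $b_{ij}=b_{ji}$ and $b_{ij}(0)=\tfrac{1}{2}\partial_i\partial_j f(0)$; thus $(b_{ij}(0))$ is half the Hessian and is non-degenerate by hypothesis.

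The core of the argument is an induction that diagonalises this variable-coefficient quadratic form. Suppose that in some coordinate system $u=(u_1,\dots,u_m)$ near $0$ one has
\[
f-f(p)=\sum_{i<r}\sigma_i u_i^2+\sum_{i,j\ge r} u_i u_j\, H_{ij}(u),
\]
with $\sigma_i\in\{+1,-1\}$, $H_{ij}=H_{ji}$ smooth, and the submatrix $(H_{ij}(0))_{i,j\ge r}$ non-degenerate. Since a non-degenerate real symmetric form admits a vector on which it is nonzero, a linear change among $u_r,\dots,u_m$ arranges $H_{rr}(0)\neq 0$, hence $H_{rr}(u)\neq 0$ and $g(u):=\sqrt{|H_{rr}(u)|}$ is smooth and nonvanishing on a smaller neighbourhood. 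I would then introduce
\[
v_r=g(u)\Bigl(u_r+\sum_{j>r} u_j\,\frac{H_{rj}(u)}{H_{rr}(u)}\Bigr),\qquad v_i=u_i\ \ (i\neq r),
\]
and check, by expanding $\sigma_r v_r^2$ with $\sigma_r=\mathrm{sign}\,H_{rr}(0)$, that it absorbs exactly the $u_r$-terms, leaving $f-f(p)=\sum_{i\le r}\sigma_i v_i^2+\sum_{i,j>r} v_i v_j\widetilde H_{ij}(v)$ with $\widetilde H_{ij}$ smooth and symmetric. The map $u\mapsto v$ has triangular Jacobian at $0$ whose $r$-th diagonal entry is $g(0)\neq 0$, so by the inverse function theorem it is a local diffeomorphism; and the new tail submatrix is non-degenerate because the Hessian of $f$ is non-degenerate independently of coordinates and the first $r$ terms are already in normal form. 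Iterating from $r=1$ to $m$ yields a diffeomorphism $h^{-1}$ with $f\circ h^{-1}(x)=f(p)+\sum_i\sigma_i x_i^2$. Finally I would permute the coordinates so the indices with $\sigma_i=-1$ come first; by Sylvester's law of inertia (the Hessian transforms by congruence under a change of chart) the number of negative signs equals the number of negative eigenvalues of the Hessian, i.e. $\gamma(p)$.

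The main obstacle is the bookkeeping in the inductive step: ensuring simultaneously that (i) after the linear adjustment $H_{rr}$ is nonzero near $0$ so that $\sqrt{|H_{rr}|}$ is a legitimate smooth coordinate factor, (ii) the substitution $u\mapsto v$ is genuinely a diffeomorphism and not merely a formal identity of Taylor series, and (iii) the residual quadratic form stays non-degenerate so the induction does not stall --- the last point being where non-degeneracy of the Hessian is essential and must be tracked through each change of variable. Everything else (the two applications of Hadamard's lemma, the symmetrisation, and the final Sylvester-type reindexing) is routine.
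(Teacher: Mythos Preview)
Your proposal is correct and is precisely the classical Milnor-style proof of the Morse lemma (Hadamard's lemma twice, then inductive completing-the-square with the inverse function theorem, finishing with Sylvester's law of inertia). The paper, however, does not supply its own proof of this lemma at all: it is stated as a standard background result and the reader is implicitly referred to the cited Morse-theory textbooks (in particular \cite{Hurtubise}). So there is no ``paper's proof'' to compare against; your argument simply fills in what the paper takes for granted, and it does so in the canonical way.
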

 An immediate corollary of  Morse lemma is
that the non degenerate critical points of a Morse function are isolated. Another direct consequence is that, in  
  the 2-dimensional case, a critical point with index values 0, 1 and 2, respectively, corresponds to a local minimum, a saddle point,  and a local maximum.  

\begin{definition}
The neighbourhood $U$ and its image $h(U)$ in   Morse lemma are  called, respectively, the {\it manifold chart} and  {\it Morse chart}.
\end{definition}
 
 \begin{example}[Quantum search]
  Continuing (from the Introduction) with Grover's search Hamiltonian with
  $f(s, \lambda) = (2^{-n}-1) s^2 +  \lambda^2$,    where we have translated the saddle point to the origin. Figure  \ref{charts} (a) gives the Morse chart around the origin. Figure \ref{charts} (b) gives the manifold chart. 
    \begin{figure}
    \centering
    \subfloat  [Morse chart for Grover's search Hamiltonian $H_{\sf Grover}(t)$. Also depicted, the negative gradient field in blue.  Orthogonal to the integral lines, in red, are  level sets  in black. ]
    {{\includegraphics[width=4.8cm]{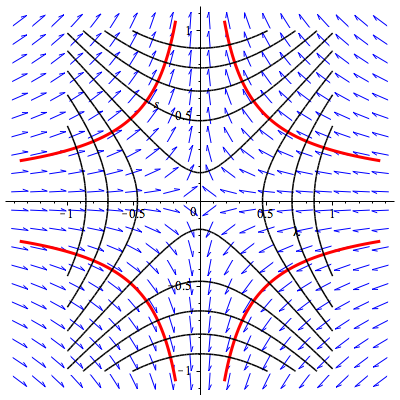} }}%
    \qquad
        \centering
    \subfloat  [ Manifold chart for Grover's search. Orthogonal to the integral lines, in red, are  level sets  in black. ]
    {{\includegraphics[width=5.5cm]{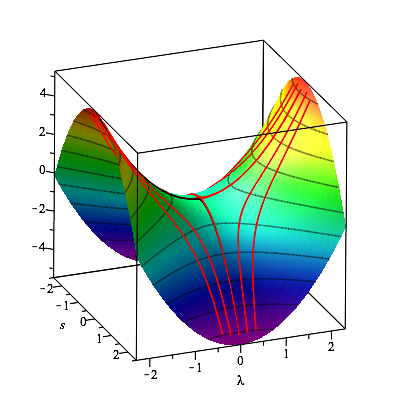} }}%
    \caption{ Morse and manifold charts for Grover's search Hamiltonian.}
    \label{charts}
  \end{figure}  
 \end{example}  
 
 ~~\\
 {Remarkably, the different local descriptions of $f$ can be glued together. The fact that these local depictions can be glued together is the incarnation of the globality of Morse theory. In the language of {\it functors}, this can be presented  as follows: Let $ \mathcal O_M$ denote the category of standard open sets of $M$
sorted by inclusion and $C^\infty(M)$ denote the category of real-valued smooth functions sorted by restrictions. 
\begin{proposition}
The functor $\mathcal F_M:  \mathcal O_M \rightarrow C^\infty(M)$, defined by the Morse lemma for each open set, is a sheaf. The Morse function $f$ is the unique global section of~$\mathcal F_M$. 
\end{proposition}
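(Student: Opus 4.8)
The plan is to verify the three defining axioms of a sheaf --- the presheaf (functoriality) axiom, the separation axiom, and the gluing axiom --- and then to read off the uniqueness of the global section from the rigidity already contained in the Morse lemma. The structural observation that makes everything go through is that $\mathcal{F}_M$ is naturally a \emph{sub-presheaf} of the structure sheaf $C^\infty_M$ of smooth real-valued functions on $M$: to a standard open set $U$ it assigns the smooth functions on $U$ that, in each manifold chart of the Morse lemma meeting $U$, agree with the prescribed quadratic normal form $f(c)-\sum_{i\le\gamma}x_i^2+\sum_{i>\gamma}x_i^2$ (and with the corresponding regular-point normal form off the critical set). Since $C^\infty_M$ is already a sheaf, it then suffices to check that this defining condition is (i) stable under restriction and (ii) stable under gluing, both of which are local conditions on $U$.

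First I would dispose of the presheaf axioms: the restriction maps $\rho^U_V$ are the ordinary restrictions of functions, so $\rho^U_U=\mathrm{id}$ and $\rho^U_W=\rho^V_W\circ\rho^U_V$ for $W\subseteq V\subseteq U$ are tautological, and the observation above shows $\rho^U_V$ carries $\mathcal{F}_M(U)$ into $\mathcal{F}_M(V)$ because "being in Morse normal form in the charts that one sees" is inherited by smaller opens. For the separation axiom, if $s,s'\in\mathcal{F}_M(U)$ restrict to the same element on every member of an open cover $\{U_i\}$ of $U$, then $s=s'$ as smooth functions --- two functions agreeing on a cover agree, a fact inherited verbatim from $C^\infty_M$. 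For the gluing axiom, given a cover $\{U_i\}$ of $U$ and sections $s_i\in\mathcal{F}_M(U_i)$ with $s_i|_{U_i\cap U_j}=s_j|_{U_i\cap U_j}$, the sheaf property of $C^\infty_M$ yields a unique smooth $s$ on $U$ with $s|_{U_i}=s_i$; and $s\in\mathcal{F}_M(U)$ because the normal-form condition can be checked in a neighbourhood of each point, where $s$ coincides with some $s_i$. Hence $\mathcal{F}_M$ is a subsheaf of $C^\infty_M$, in particular a sheaf. (Equivalently, one may first define $\mathcal{F}_M$ only on the basis of standard opens supplied by the manifold charts and their finite intersections, verify the sheaf condition there, and invoke the standard extension of a sheaf from a basis to all opens.)

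It remains to identify $\mathcal{F}_M(M)$. By construction $f$ is a global section. Conversely, let $g\in\mathcal{F}_M(M)$ and cover $M$ by the manifold charts around the points of $\mathrm{Crit}(f)$ together with regular charts on the complement of $\mathrm{Crit}(f)$; on each such chart the Morse lemma (and its regular analogue) pins down $g$ up to the residual reparametrization ambiguity --- the action of $O(\gamma)\times O(m-\gamma)$ on the quadratic form, respectively a change of the distinguished coordinate at regular points --- none of which alters the underlying function, so $g$ and $f$ coincide chart by chart and hence on a cover of $M$, whence $g=f$ by separation. This last step is the one I expect to be the genuine obstacle: one must argue that the ambiguity in the Morse lemma is \emph{only} through diffeomorphisms that preserve $f$, and that the critical-point charts together with the regular charts genuinely cover $M$ and overlap enough to rigidify the off-critical behaviour, so that "normal form in every chart" really does force the global section to be $f$. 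Making the term "standard open set" precise --- say, connected opens contained in a single Morse or regular chart --- is what keeps this bookkeeping clean.
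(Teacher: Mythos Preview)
Your proposal is correct in spirit and considerably more detailed than what the paper actually supplies. The paper's own ``proof'' is a single sentence: it declares the result ``almost evident'' for readers familiar with sheaf language, and singles out the fact that critical points are isolated as the one ingredient worth mentioning. There is no explicit verification of the sheaf axioms and no discussion of the uniqueness of the global section beyond that remark.

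Your argument follows the same conceptual route the paper gestures at --- realise $\mathcal{F}_M$ as a sub-presheaf of the structure sheaf $C^\infty_M$, inherit separation and gluing from there, and use the chart-by-chart rigidity of the Morse normal form for uniqueness --- but you actually spell out each step. The isolated-critical-points observation that the paper highlights is precisely what underwrites your covering of $M$ by disjoint Morse charts around the critical set together with regular charts on the complement, so your treatment subsumes the paper's hint. The caveat you flag at the end (making ``standard open set'' precise and controlling the reparametrisation ambiguity in the Morse lemma) is genuine, but the paper does not address it either; given the informal level at which the proposition is stated, your proof is at least as complete as the original.
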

For readers familiar with this language, the proof is almost evident, particularly when  we take into account that critical points are isolated. }
 
 ~~\\
 Let $X$ be a smooth vector field on  the cobordism $M$; that is, 
\begin{equation}
	X= \sum_1^m \xi_i (x) \partial_{x_i}
\end{equation}
where $x=(x_1, \cdots, x_m)$ coordinates $M$ and $\xi_i$ are smooth functions  on $M$. For $x^0\in M,$
 consider the initial value problem (IVP) for smooth curves $x: \mathbb R\rightarrow M$ given by
\begin{equation}\label{IVP}
 	\frac{d{x}(\tau)}{d \tau} = X(x(\tau)), \quad x(0)=x^0.
\end{equation}	
Solutions $x(\tau)$  are called {\it trajectories} (also  {\it flow lines}  or {\it instantons}); their properties are summarized in the following proposition (\cite{Hurtubise}):

\begin{proposition}
 The following  assertions are  true: 
\begin{enumerate}
 \setlength\itemsep{0.0em}
\item   the set of all trajectories  of the IVP problem \pref{IVP}  covers $M$;
\item if  $M$ is closed, then the trajectory $x(\tau)$ is defined for all $\tau\in \mathbb R$;  
\item the limits  $\mathrm {lim}_{\tau\rightarrow - \infty} x(\tau)$ and $\mathrm {lim}_{\tau\rightarrow + \infty} x(\tau)$ 
are two critical points of $f$ (this gives a procedure to locate critical points: starting anywhere on the manifold $M$ and following the flow);  
\item trajectories can escape to the boundary;   
\item trajectories intersect only at critical points.  
\end{enumerate}
\end{proposition}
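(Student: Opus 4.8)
The plan is to derive all five assertions from two inputs: the classical existence and uniqueness theory for smooth vector fields (Picard--Lindel\"of), and the two structural features special to the present setup, namely that $X=-\nabla f$ is a \emph{gradient} field and that $f$ is \emph{Morse}, so that its critical points are isolated (the corollary to Morse lemma recorded above). Items 1, 2 and 5 will come from the former, items 3 and 4 from the latter.

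Since $X$ is smooth it is locally Lipschitz, so through each $x^{0}\in M$ there passes a unique maximal integral curve of \pref{IVP}, and the union of all of these is $M$; that is item 1. For item 2, if $M$ is closed the escape lemma forbids the maximal interval from being bounded --- a trajectory defined only on $(\,a,b\,)$ with, say, $b<\infty$ would have to leave every compact subset of $M$ as $\tau\uparrow b$, which is impossible when $M$ itself is compact --- so the maximal interval is $\mathbb R$. For item 5, if $x_{1}(\tau_{1})=x_{2}(\tau_{2})=q$, then after translating the time parameter of $x_{2}$, uniqueness forces $x_{1}$ and $x_{2}$ to trace the same orbit; hence two \emph{distinct} orbits are disjoint, and the only place distinct orbits can meet is in their limit points, which are critical (see the next paragraph). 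Note that a critical point $p$ is itself a trajectory --- the constant curve at $p$ --- so no non-constant orbit literally passes through $p$, it can only be asymptotic to it.

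For items 3 and 4, I would differentiate $f$ along an orbit: $\frac{d}{d\tau}f(x(\tau))=\langle\nabla f,\dot x\rangle=-\lVert\nabla f(x(\tau))\rVert^{2}\le 0$, so $f$ is non-increasing and, being valued in $[0,1]$, $f(x(\tau))$ has finite limits as $\tau\to\pm\infty$, with $\int_{-\infty}^{+\infty}\lVert\nabla f(x(\tau))\rVert^{2}\,d\tau\le f(x(-\infty))-f(x(+\infty))<\infty$. If the forward orbit stays in some compact $K\subset M\setminus\partial M$, its $\omega$-limit set is non-empty, compact and connected, the integral bound forces $\nabla f\equiv 0$ on it, and the fact that critical points of $f$ are isolated makes this connected set a single critical point --- that is $\lim_{\tau\to+\infty}x(\tau)$; reversing time yields $\lim_{\tau\to-\infty}x(\tau)$, which is item 3. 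If instead the orbit does \emph{not} remain in such a compact set, then, since $\lVert\nabla f\rVert$ attains a positive minimum on every compact subset of $M\setminus\partial M$ disjoint from the critical set, $f$ must decrease to $0$ in finite time, i.e. the orbit reaches the boundary component $M_{0}=f^{-1}(0)$ (symmetrically a backward orbit may reach $M_{1}$); this is item 4.

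I expect the delicate step to be the \emph{existence} of the limit in item 3 --- not just convergence of $f$ along the orbit and the fact that the $\omega$-limit set consists of critical points, but that this set is a \emph{single} point. I would get this from connectedness of the $\omega$-limit set together with the critical points being isolated (or, equivalently, from a Lojasiewicz gradient inequality in a Morse chart near each critical point). A minor additional subtlety is that item 3, read literally, holds only for orbits that avoid $\partial M$; the accurate statement is the dichotomy of items 3 and 4 together, which is how I would present it.
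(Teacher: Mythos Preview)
Your argument is correct and follows the standard route one finds in textbook treatments of Morse theory. The paper itself does not prove this proposition at all: it simply records the statement and attributes it to \cite{Hurtubise}, so there is no in-paper proof to compare against. Your identification of the dichotomy between items 3 and 4, and your observation that item 3 as literally stated presupposes the orbit does not reach $\partial M$, are more careful than the paper's own phrasing; the use of connectedness of the $\omega$-limit set together with isolatedness of Morse critical points to pin down a single limit point is exactly the right mechanism.
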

The flow generated by $X$ is the smooth map 
${\Phi_{-}: \mathbb R\times M\rightarrow M}$, which sends the pair~$(\tau, x^0)$ to the point $\Phi_{\tau}(x^0)=x(\tau)$, where $x(\tau)$
is the solution of the initial value problem \pref{IVP}. The set $\{\Phi_\tau\}_\tau$
is a one parameter group of diffeomorphisms on $M$ with $\Phi_{\tau+\tau'} = \Phi_\tau\Phi_{\tau'}$. The orbit $\mathcal O(x^0)$
through $x^0$ is defined by the set~{$\{ \Phi_\tau(x^0)|\, \tau\in \mathbb R\}$}.  
\begin{definition}
The flow associated with 
$
X = -\nabla f = -\sum f_{x_i}(x) \partial_{x_i}$ is called the  gradient flow. 
\end{definition}
For gradient flows, we have two more properties in addition to the ones listed in the above proposition. First,  if $x(\tau)$
is a trajectory, then $\frac{d}{d\tau}f(x(\tau))  = -|\nabla f(x(\tau))|^2$, which shows that $f$ is decreasing along trajectories (and zero only at the critical points). Second, at any regular point of  the gradient $\nabla f$, trajectories  are orthogonal to the level sets
 $f(x)=\mathrm{constant}$. This second property is manifestly important in the next section. 

 
 \section{Gradient Flows for Quantum Adiabatic Evolution}
Let $\{\varphi_\varepsilon (s) \}$ be the collection of the adiabatic solutions one gets for the different initial conditions
 (these solutions are given by (\ref{adiabaticSol0}) or  (\ref{adiabaticSolm}), depending whether eigenvalues crossings occur or not). 
 Our aim now is to assign a gradient flow to this collection of solutions. In fact, the gradient flow we construct below is ``orthogonal" to this adiabatic flow, in the sense that the adiabatic solutions are level sets of this gradient flow. As mentioned before, we consider 
 the smooth  function $f (s, \lambda) = det(H(s)-\lambda I)$, which itself gives rise to  the gradient flow 
\begin{system}\label{gradientFlowEnergies}
	\Frac{d}{d \tau} \lambda(\tau) &=& - \partial_\lambda f (\lambda(\tau), s(\tau)),\\[3mm]
	\Frac{d}{d \tau} s(\tau) &=& - \partial_s f (\lambda(\tau), s(\tau)).
\end{system}%
The function $f$ is defined on a open set $U\subset \mathbb R^2$, which we assume   big enough to include all the critical point of $f$ (which itself is assumed to not possess critical points at the infinities).
 Later on, the subset $U$ will be replaced with a cobordism in $\mathbb R^3$
constructed from the critical points of $f$, using the  handle decomposition procedure. In this case, $U$ will be play the role of a Morse chart.
In addition to this picture, we have
a similar picture for a Morse chart $V := (U\cap s-axis)  \times \mathrm{span}_{\mathbb R} \left(\{\varphi_\varepsilon \} \right)  \subset \mathbb R\times \mathcal H$  corresponding to the level sets given by the curves $\{\varphi_\varepsilon (s) \}$.
 Indeed, we can ``lift" the function $f$ to $V$ as follows:

\begin{center}
\begin{tikzcd}
V  \arrow[rd, "g= f\circ h"] \arrow[d, "h"]  & \\
 U  \arrow[r, "f"] & \mathbb R
\end{tikzcd}
\end{center}
with $h$ being the linear map: 
\begin{equation}\label{h}
	h\left(s, \varphi \right) = \left(s, \langle \varphi (s) |H(s)| \varphi (s)\rangle\right). 
\end{equation}
In particular, $g(s, |\varphi_\varepsilon \rangle)=0 + O(\varepsilon^{2\gamma})$ for the adiabatic solution $|\varphi_\varepsilon(s)\rangle$; this is because  $\langle \varphi_\varepsilon(s) |H(s)| \varphi_\varepsilon(s)\rangle = \lambda^i(s)  + O(\varepsilon^{2\gamma})$. This means that the function~$\tilde f$ is constant on the adiabatic solutions; that is,  the latter  are
level sets  for the gradient flow  
\begin{system}\label{gradientFlowStates}
	\Frac{d}{d \tau} \varphi^{(j)}(\tau) &=& - \partial_{\varphi^{(j)}} g(\varphi(\tau), s(\tau))\\[3mm]
	\Frac{d}{d \tau} s(\tau) &=& - \partial_s g(\varphi(\tau), s(\tau)). 
\end{system}%
Here, $\varphi^{(j)} $ the $j$-th component of the vector $|\varphi \rangle\in \mathrm{span}_{\mathbb R} \left(\{\varphi_\varepsilon \} \right)$.   

\begin{proposition}
	The two flows (\ref{gradientFlowEnergies}) and (\ref{gradientFlowStates}) are topologically equivalent. 
\end{proposition}
 Recall that two autonomous systems of ordinary differential equations $x' = f(x),\, x\in M$ and $y'=g(y), \, y\in M'$ are said to be topologically equivalent~\cite{MR947141} if there exists a homeomorphism   
 $h:M\rightarrow M', \, y=h(x)$, which maps solutions of the first system to the second, preserving the direction of time. 
\begin{proof} Since the direction of time is preserved (i.e., same independent variable $\tau$ in both systems) we  need only to prove that $h$ given by (\ref{h}) is a homeomorphism. For that we prove the two systems
have the same critical points, which implies that $h$ is invertible outside this set of critical points. 
The fact that
$\nabla g(x) = \nabla f(h(x)) \nabla h(x)$ for $x\in V$ implies that 
the set of critical points of  the dynamical system (\ref{gradientFlowStates}) is 
the union of the set of critical points of (\ref{gradientFlowEnergies}) and   the set of critical points of $h$.  By the implicit function theorem,
critical points of $h$  are where $h$ fails to be injective, i.e., points $(s, \varphi_i(s))$ and $(s, \varphi_i(s))$  with
$\lambda_i(s)=\lambda_j(s)$. But these correspond to critical points (intersections of level sets) of (\ref{gradientFlowEnergies}). Thus,  the two dynamical systems
(\ref{gradientFlowEnergies}) and (\ref{gradientFlowStates}) have the same critical points. 
 \end{proof}

~~\\
In this section we assigned two topologically equivalent gradient flows (\ref{gradientFlowEnergies} and \ref{gradientFlowStates})   to the adiabatic solutions of the Schr\"odinger equation \pref{shrodingerWithS}. The adiabatic solutions (respectively,  their energies) are level curves orthogonal to the integral curves of the gradient flow  \ref{gradientFlowEnergies} (respectively, \ref{gradientFlowStates}).
In the next section, we concentrate on  the implication of this  perspective   when $f$ is Morse.


\section{The Non-Degenerate Case: Morse Theory}
The goal of this section is to prove Theorem \ref{thm1} below, which relates the adiabatic
flow associated to the Hamiltonian $H(s)$ to the topology of a certain cobordism~$(M, \partial M)$.  Our proof  is a direct  corollary of the different constructions we have built and shall build below.  
We start   with  the construction of the cobordism~$(M, \partial M)$.

\subsection{Handlebodies Decomposition: The Cobordism of The Quantum  Adiabatic Evolution} 
Consider   the gradient flow \pref{gradientFlowEnergies}  where the Morse function $f (s, \lambda) = det(H(s)-\lambda I)$ is
defined on a open compact set $U\subset \mathbb R^2$ (the Morse chart), which we have assumed to be big enough to include all the critical points of $f$ (which themselves are assumed to be non degenerate). We would like to extend the domain of $f$ to  a  cobordism $(M, \partial M)$  using the technique of {\em handlebodies decomposition}s which we  review now. Define 
 the sublevel set (also half-space) $M_c$ to be the set of points of $M$ at which $f$
takes values less than or equal to $c$: 
$$
M_{c} := \{x\in M|\, \quad f(x)< c\}
$$
Let $p$ be a non degenerate  critical point of $f$ with index $\gamma(p)$, and let
$D^{i}$ denotes the $i$-dimensional unit disc $\{x \in \mathbb R^i \, \mathrm{ such that} \, |x| \leq 1 \}.$
The ${\gamma(p)} $-handle  is defined to be 
the set
\begin{equation}
D^{\gamma(p)} \times D^{m-\gamma(p)}.  
\end{equation}
Examples of handles are given in Figure \ref{hdecomposition}.
We have the following result  (for proof see for instance \cite{MR1873233,Hurtubise}): 
\begin{proposition}\label{handlebodies} 
Assuming, for sufficiently small $\varepsilon>0$,  the only critical
value in $[c-\varepsilon, c+\varepsilon]$ is $c=f(p)$, then the sublevel set 
$
	M_{c+\varepsilon} 
$
can be obtained from the sublevel set $M_{c-\varepsilon}$ by attaching the ${\gamma(p)} $-handle. 
\end{proposition}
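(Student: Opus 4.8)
The plan is to prove this via Morse lemma together with a deformation-retraction argument, following the classical treatment (Milnor, Hurtubise). First I would reduce to the local picture: since $c=f(p)$ is the only critical value in $[c-\varepsilon,c+\varepsilon]$ and critical points are isolated (a corollary of Morse lemma already noted in the excerpt), by shrinking $\varepsilon$ we may assume $p$ is the unique critical point in $f^{-1}([c-\varepsilon,c+\varepsilon])$. Away from $p$, the gradient $\nabla f$ is nonvanishing on the compact set $f^{-1}([c-\varepsilon,c+\varepsilon])\setminus V(p)$ for a small Morse neighbourhood $V(p)$, so the normalized negative gradient flow $-\nabla f/|\nabla f|^2$ is well-defined there and pushes $f$-values down at unit rate; integrating it gives a diffeomorphism showing that $M_{c+\varepsilon}$ deformation-retracts onto $M_{c-\varepsilon}$ with a neighbourhood of $p$ glued on. This is the standard "no critical value $\Rightarrow$ sublevel sets are diffeomorphic" lemma, applied on the complement of $V(p)$.

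Next I would invoke Morse lemma (the Lemma stated above) to get coordinates $(x_1,\dots,x_m)$ on a chart $U\ni p$ in which $f = c - \sum_{i\le\gamma(p)} x_i^2 + \sum_{i>\gamma(p)} x_i^2$. In these coordinates the local model is explicit: set $\xi=(x_1,\dots,x_{\gamma(p)})$ and $\eta=(x_{\gamma(p)+1},\dots,x_m)$, so $f = c - |\xi|^2 + |\eta|^2$. Then I would exhibit, by an elementary explicit computation in this model, that $\{f\le c+\varepsilon\}$ within the chart is obtained from $\{f\le c-\varepsilon\}$ within the chart by attaching a handle $D^{\gamma(p)}\times D^{m-\gamma(p)}$ along $\partial D^{\gamma(p)}\times D^{m-\gamma(p)}$: the core disc $D^{\gamma(p)}\times\{0\}$ is (a rescaling of) $\{|\xi|^2\le\varepsilon,\ \eta=0\}$, which lies in $\{f\le c+\varepsilon\}$ but meets $\{f\le c-\varepsilon\}$ only in its boundary sphere. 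One then checks that $\{f\le c-\varepsilon\}\cup(\text{handle})$ is a deformation retract of $\{f\le c+\varepsilon\}$; this requires interpolating the handle region with the region swept by the gradient flow, and the standard trick is to modify $f$ slightly (subtract a bump function supported near $p$) so that the modified sublevel set equals $M_{c-\varepsilon}$ together with the handle, and then argue the modification does not change diffeomorphism type of the sublevel set because it introduces no new critical values in the relevant range.

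The main obstacle is the gluing/smoothing step: making precise that the abstract handle attachment $M_{c-\varepsilon}\cup_{\partial D^{\gamma(p)}\times D^{m-\gamma(p)}}(D^{\gamma(p)}\times D^{m-\gamma(p)})$ is diffeomorphic (not merely homotopy equivalent) to $M_{c+\varepsilon}$, and that the attaching map extends to a collar so the result is a smooth manifold with corners rounded. This is exactly where one needs the careful interpolation between the Morse-chart region and the gradient-flow region, and where the cited references do the real work; I would state it and refer to \cite{MR1873233, Hurtubise} rather than reproduce the smoothing in full, since for the purposes of this paper only the existence of such a handle decomposition (and the resulting correspondence between critical points and handles in Figure \ref{hdecomposition}) is needed. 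In the $2$-dimensional case relevant to $f(s,\lambda)=\det(H(s)-\lambda I)$, everything is completely explicit: index $0,1,2$ give a $0$-handle (disc/cap), $1$-handle (band), and $2$-handle (cap glued on top), matching the three pictures already displayed.
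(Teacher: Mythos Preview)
Your proposal is correct and follows exactly the classical Milnor--Hurtubise argument that the paper itself invokes: in fact the paper does not give a proof at all but simply refers the reader to \cite{MR1873233,Hurtubise}, so your sketch (Morse chart, normalized gradient flow on the complement, bump-function modification, and the smoothing caveat) is already more detailed than what appears there and is precisely the content of those references.
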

 Now, let us assume that the critical values are such that
$
	f(p_1)< f(p_2) <\cdots <f(p_k).
$
If this is not the case, 
we may  perturb $f$ in such a way that $f(p_i)\neq f(p_j)$ if $i\neq j$ (See for instance Theorem 2.34, Chapter 2 in \cite{MR1873233}).  Proposition  \ref{handlebodies} says that starting from the lowest critical value   $f(p_1)$ one can then reverse engineer the manifold by inductively  attaching  ${\gamma(p_i)} $-handle to the sublevel set $M_{f(p_{i-1})}$.  

~~\\ Returning to our context of adiabatic evolution we obtain the following proposition:
\begin{proposition}\label{hbprop}
	Let $f(s, \lambda)= det\left(H(s) -\lambda I\right)$. If $f$ is Morse, then 
	the spectrum of $H(s)$   corresponds to level sets of the gradient flow 	
	(\ref{gradientFlowEnergies}) defined on a cobordism $(M, \partial M)$, which  obtained, from the critical points of $f$,
	using the handlebodies decomposition procedure.
\end{proposition}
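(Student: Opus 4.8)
The plan is to assemble the constructions already in place; the statement is essentially their synthesis, so the ``proof'' is a matter of bookkeeping plus two technical points.

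First I would record what the Morse hypothesis buys on the (compact) Morse chart $U$. By the corollary to the Morse lemma the critical points of $f(s,\lambda)=\det(H(s)-\lambda I)$ are isolated; since $f$ has no critical points at infinity and $U$ is chosen to contain all of them, there are only finitely many, $p_1,\dots,p_k$, with Morse indices $\gamma(p_i)\in\{0,1,2\}$ (the two-dimensional case). After a $C^\infty$-small perturbation of $f$ that moves neither the critical points nor their indices (e.g.\ Theorem 2.34, Chapter 2 of \cite{MR1873233}) I may assume the critical values are pairwise distinct, $f(p_1)<\cdots<f(p_k)$.

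Next I would run the handlebodies decomposition. Fix $\varepsilon>0$ smaller than every gap $f(p_{i+1})-f(p_i)$; starting from the regular sublevel set $M_{f(p_1)-\varepsilon}$ and applying Proposition \ref{handlebodies} inductively, $M_{f(p_i)+\varepsilon}$ is obtained from $M_{f(p_{i-1})+\varepsilon}=M_{f(p_i)-\varepsilon}$ by attaching the $\gamma(p_i)$-handle $D^{\gamma(p_i)}\times D^{2-\gamma(p_i)}$. Setting $M:=M_{f(p_k)+\varepsilon}$ produces a compact manifold whose boundary splits as $\partial M=M_0\sqcup M_1$, with $M_0=f^{-1}(f(p_1)-\varepsilon)$ and $M_1=f^{-1}(f(p_k)+\varepsilon)$ (either piece possibly empty, according to whether the extreme critical points cap the surface off); thus $(M,\partial M)$ is a cobordism built entirely out of the critical data of $f$. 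The dictionary of Figure \ref{hdecomposition} shows that $f$ extends over $M$ as a Morse function having exactly the $p_i$ as critical points, of the same indices, and that $U$ sits inside $M$ as the Morse chart carrying the gradient dynamics. Then I would transport the flow and read off the spectrum: the field $-\nabla f$ of \pref{gradientFlowEnergies} is now defined on $(M,\partial M)$; by the trajectories proposition of \cite{Hurtubise} its flow lines limit onto the $p_i$, may escape to $\partial M$, and at regular points are orthogonal to the level sets $\{f=\mathrm{const}\}$. For fixed $s$ the equation $f(s,\lambda)=\det(H(s)-\lambda I)=0$ cuts out exactly the spectrum of $H(s)$, so as $s$ varies the curves $s\mapsto\lambda^{i}(s)$ sweep out the regular level set $\{f=0\}$ of the gradient flow, its branches being the connected components. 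Combining the three steps gives the proposition.

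The hard part will be the passage from the planar picture ``$f$ on $U\subset\mathbb R^2$'' to the abstract cobordism ``$f\colon(M,\partial M)\to[0,1]$'': pinning down the embedding of $M$ in $\mathbb R^3$, choosing the affine normalization so that $0$ is a regular value lying in the interior of the range, and checking that the successive handle attachments can be carried out without disturbing the portion of the gradient flow that carries the zero level set, i.e.\ the eigenenergies. The two technical pillars are the perturbation lemma that separates critical values and the control of trajectories that escape to $\partial M$ (item 4 of the trajectories proposition); everything else is a routine application of Proposition \ref{handlebodies} and of the orthogonality property from Section 3.
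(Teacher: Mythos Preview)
Your proposal is correct and follows the same route as the paper---indeed it is considerably more thorough. In the paper this proposition is not given an independent proof at all: it is stated as a synthesis of the preceding constructions (the gradient flow \pref{gradientFlowEnergies} from Section~4, the handle-attachment Proposition~\ref{handlebodies}, and the observation that $\{f=0\}$ is the spectrum), and the text immediately moves on. Your three-step argument (separate the critical values, build $(M,\partial M)$ by iterated handle attachment, then read off the spectrum as the zero level set orthogonal to the flow) is exactly what the paper has in mind but leaves implicit; the ``hard part'' you flag about the passage from the planar chart $U$ to the abstract cobordism is genuine, and the paper simply does not address it.
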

In the next subsection, will explain how the gradient flow of $f$  triangulates the cobordism~$(M, \partial M)$.
 
 \subsection{Morse Homology for Non-Degenerate Quantum Adiabatic Evolution}

 \subsubsection{Transversality and Morse-Smale functions }
 For completeness, we start  by reviewing the notion of Morse-Smale functions. The motivation for this  ``technical"  notion is that not all Morse functions give triangulation of the cobordism $(M, \partial M)$. Such functions need to satisfy an additional requirement: Morse-Smale transversality, which, in reality, is not a major constraint; in general, it suffices to ``tilt'' the Morse function slightly to obtain Morse-Smale transversality -- The height function of a vertical torus is a  typical example. 
In general, 
a {\it Smale-Morse function} is a Morse function~$f:M\rightarrow [0,1]$, where the unstable manifold
$W^u(q):= \{q\} \cup \{x\in M|\, \mathrm{lim}_{\tau\rightarrow -\infty} \Phi_\tau(x)=q \}$ and the stable manifold
$W^s(p) :=\{p\}\cup \{x\in M|\, \mathrm{lim}_{\tau\rightarrow +\infty} \Phi_\tau(x)=p \}$ of $f$ intersect transversally for all
$p$ and $q$ in $critical(f)$.

 

\subsubsection{The integers $n(p,q)$: Counting flow lines}
Consider two critical points $p$ and $q$ of index $\gamma(p)=k-1$ and $\gamma(q)=k$ respectively, and 
assume that the intersection $W(q, p)=W^u(q)\cap W^s(p)$ is  not empty. Let $x(\tau):\mathbb R\rightarrow M$ be a gradient flow line
from $q$ to $p$:
\begin{equation}
	\frac{d}{d\tau} x(\tau) = - (\nabla f) (x(\tau)), \quad lim_{\tau\rightarrow -\infty} x(\tau) =q,\, lim_{\tau\rightarrow \infty} x(\tau) =	p.
\end{equation}
  Now since  $\gamma(q)-\gamma(p)=1,$ the set (moduli space) $\mathcal M(q, p):= W(q, p)/\mathbb R$  of flow lines connecting $q$
  and $p$ is zero dimensional. In fact, it consists of a finite number of elements (Corollary 6.29 \cite{Hurtubise}). 
 The integer $n(q, p)\in \mathbb Z_2$ is defined to be the sum mod 2 of these numbers. 
\subsubsection{Morse chain and homology}
 Let $C_k(f)$ be the free abelian group generated by the critical
  points of index~$k$, and define $C_*(f) = \bigoplus_{k} C_k(f).$ 
  The homomorphism $\partial_k: C_k(f) \rightarrow C_{k-1}(f)$ defined by 
\begin{equation}
  	\partial_k(q) = \sum_{p\in {critical}_k(f)}n(q, p)p
\end{equation}
is called the Morse boundary operator, and the pair
$(C_*(f), \partial_*)$ is called the Morse  chain complex of $f$.
  The deep connection between Morse theory and  topology is expressed in the following
  landmark result.
 \begin{proposition} [Morse homology theorem]
 	The pair $(C_*(f), \partial_*)$  is a chain complex, and its   homology  is isomorphic
	to the singular relative homology $H_*(M, \partial M)$.
 \end{proposition}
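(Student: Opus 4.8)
The plan is to prove the statement in two independent stages: first that $\partial_{k-1}\circ\partial_k=0$, so that $(C_*(f),\partial_*)$ is genuinely a chain complex, and then that its homology computes $H_*(M,\partial M)$. For the first stage, fix critical points $q,r$ with $\gamma(q)=k$ and $\gamma(r)=k-2$; the coefficient of $r$ in $\partial_{k-1}\partial_k(q)$ is
\begin{equation}
\sum_{p\in critical_{k-1}(f)} n(q,p)\,n(p,r)\ \ (\mathrm{mod}\ 2).
\end{equation}
I would read this sum off the boundary of a one-dimensional moduli space: under the Morse--Smale transversality hypothesis recalled above, $\mathcal M(q,r)=W(q,r)/\mathbb R$ is a smooth $1$-manifold, and the gluing construction shows it admits a compactification $\overline{\mathcal M}(q,r)$ to a \emph{compact} $1$-manifold whose boundary is exactly $\bigsqcup_{p}\mathcal M(q,p)\times\mathcal M(p,r)$, the set of once-broken trajectories through intermediate index-$(k-1)$ critical points. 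Since a compact $1$-manifold has an even number of boundary points, $\sum_p |\mathcal M(q,p)|\cdot|\mathcal M(p,r)|\equiv 0\pmod 2$, which is precisely the displayed coefficient; hence $\partial^2=0$. (Over $\Z_2$, as here, orientations play no role; over $\Z$ one would equip the moduli spaces with coherent orientations and check that the two broken ends of each component carry opposite signs.)

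For the identification with $H_*(M,\partial M)$ I would lean on the handlebody decomposition of Proposition \ref{handlebodies}. Perturb $f$ so that $f(p_1)<\cdots<f(p_k)$; then as $c$ increases past $f(p_i)$ the sublevel set $M_c$ changes by attachment of a $\gamma(p_i)$-handle, which up to homotopy rel $M_0$ is the attachment of a single $\gamma(p_i)$-cell. Iterating, $(M,\partial M)$ is homotopy equivalent rel $M_0$ to a relative CW pair with exactly one $j$-cell for each index-$j$ critical point, so $H_*(M,\partial M)$ is the homology of the associated cellular chain complex $(C_*^{\mathrm{CW}},\partial^{\mathrm{CW}})$. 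Its generators already match those of $C_*(f)$, so it remains to show that the cellular incidence number between the cell of $q$ (index $k$) and the cell of $p$ (index $k-1$) equals $n(q,p)\bmod 2$. This is a local computation near $p$: in a Morse chart the descending sphere attached along $q$ meets the ascending (belt) sphere of $p$'s handle transversally, in bijection with the flow lines counted by $\mathcal M(q,p)$, and each intersection contributes $\pm1$ (here $1\bmod 2$) to the degree of the attaching-then-collapse map; summing gives $n(q,p)$. Therefore $(C_*^{\mathrm{CW}},\partial^{\mathrm{CW}})=(C_*(f),\partial_*)$ and their homologies agree.

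A self-contained alternative, which I would actually prefer, avoids CW machinery altogether: define a chain map $\Phi\colon C_*(f)\to S_*(M,\partial M)$ into singular chains by sending a critical point $q$ to a fixed singular chain carrying its descending disc $\overline{W^u(q)}$ — whose boundary, by the compactness invoked in the first stage, is $\sum_p n(q,p)\,\overline{W^u(p)}$ plus a piece lying in $\partial M$ — together with a chain map back built from ascending discs; the two compositions are shown to be chain homotopic to the identity by the same broken-trajectory count. Either way, the combinatorial heart of the theorem is the first stage.

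The main obstacle is not any of the bookkeeping above but the analytic foundations the excerpt only alludes to: (i) genericity of Morse--Smale transversality, so the moduli spaces $\mathcal M(q,p)$ are manifolds of dimension $\gamma(q)-\gamma(p)-1$; (ii) compactness of $\overline{\mathcal M}(q,p)$ — no sequence of flow lines can degenerate except by breaking at critical points; and (iii) the gluing theorem realising every broken trajectory as a limit of unbroken ones, so that the $1$-dimensional moduli spaces are genuinely $1$-manifolds \emph{with boundary}. These three facts (all in \cite{Hurtubise}) are the technical core. In the present cobordism setting there is the additional wrinkle, flagged earlier, that trajectories may escape to $\partial M=M_0\sqcup M_1$; one disposes of it by arranging $f$ so that $M_0=f^{-1}(0)$ and $M_1=f^{-1}(1)$ with $\nabla f$ transverse to the boundary, inward-pointing along $M_0$ and outward along $M_1$, so that no trajectory between interior critical points ever reaches $\partial M$ and $H_*(M,\partial M)$ is the correct target.
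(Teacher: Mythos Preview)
Your sketch is correct and follows the standard two-stage argument (compactness/gluing for $\partial^2=0$, then identification with cellular homology via the handle decomposition, or the descending-disc chain map). The paper, by contrast, does not prove this proposition at all: its entire proof reads ``See for instance \cite{Hurtubise} Theorem 7.4.'' So you have supplied substantially more than the paper does; what you wrote is essentially an outline of the argument one would find in that reference.

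One small point worth tightening: in your CW identification you build the complex ``rel $M_0$'' by attaching handles as $c$ increases, which naturally produces a CW pair $(M,M_0)$ and hence computes $H_*(M,M_0)$, not $H_*(M,\partial M)=H_*(M,M_0\sqcup M_1)$. Your final paragraph acknowledges the boundary issue but does not quite close this gap; to land on $H_*(M,\partial M)$ as stated you would need either to observe that the top boundary $M_1$ contributes no additional relative cycles once the gradient points outward there (so the inclusion $(M,M_0)\hookrightarrow(M,\partial M)$ behaves well on homology in the cases at hand), or simply to note that the paper's own statement is slightly loose on this point and the canonical target is $H_*(M,M_0)$. Either resolution is fine, but the distinction should be made explicit.
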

 \begin{proof}
See for instance  \cite{Hurtubise} Theorem 7.4. 
 \end{proof}



 \begin{example}[Deformed sphere -- Figure \ref{homologies}] 
 We have the chain complex
 \begin{equation}
 0\rightarrow  span_{\mathbb{Z}_2} (p_3, p_4)
 \rightarrow span_{\mathbb{Z}_2} (p_2)\rightarrow  span_{\mathbb{Z}_2} (p_1)
 \rightarrow 0,
 \end{equation}
 with the boundary maps
 $\partial p_4=\partial p_3 =p_2$ and  $\partial p_2 =2p_1\, mod\, 2=0$. There are flow lines connecting the two maxima to the minimum (not drawn below) but these ones are not part of the definition. We obtain  $\beta_0=1,\,  \beta_1=0$, and $\beta_2=1$.
 
  \end{example}	
   \begin{figure}[h]
    \centering
    \subfloat{{\includegraphics[width=11cm]{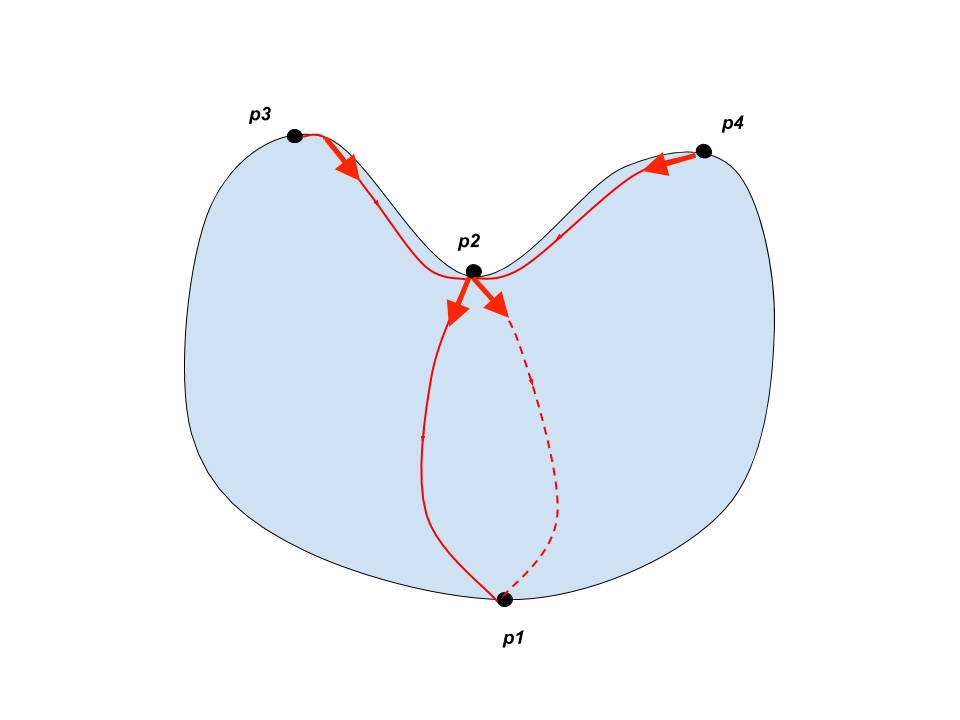} }}%
        
    \caption{{\footnotesize   
    Morse complex for deformed sphere. Flow lines are flowing downhill.
    }}%
    \label{homologies}%
\end{figure}

~~\\
So far, for the sake of 
simplicity of the exposition, we have used $\mathbb{Z}_2$ as the field of coefficients. Morse homology theorem holds, however, with coefficients in any commutative ring  with unity $K$. The theorem is proved for $\mathbb{Z}$, the ring of integers and by a universality argument the proof extends to any commutative ring with unity. When counting the numbers $n(p,q)\in \mathbb{Z}$, a notion of orientation is introduced \cite{Hurtubise}. Now, our main result of this section is the following:

\begin{theorem}\label{thm1}
	Let $f(s, \lambda) = det(H(s)-\lambda I)$, assumed to be Morse, and let $(M, \partial M)$ be  the associated cobordism constructed in Proposition \ref{hbprop}. 
	The adiabatic flow associated with $H(s)$ 
	is uniquely defined by the set ${critical}(f)$ 
	and the singular   homology~$H_*(M,\partial M; K)$.
\end{theorem}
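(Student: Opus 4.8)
The plan is to read Theorem~\ref{thm1} as the synthesis of three constructions already in place: the identification in Section~4 of the adiabatic solutions with level sets of the gradient flow~(\ref{gradientFlowEnergies}), the handlebodies reconstruction of $(M,\partial M)$ from $\mathrm{critical}(f)$ in Proposition~\ref{hbprop}, and the Morse homology theorem. Before invoking the last one I would normalize $f$: if $f$ is Morse but not Morse--Smale, replace it by a $C^\infty$-small perturbation that is Morse--Smale (a slight ``tilt'', as recalled in the discussion of Morse--Smale functions in Section~5); this neither creates nor destroys critical points nor changes any index $\gamma(p_i)$, and it alters the gradient flow~(\ref{gradientFlowEnergies}) --- hence, by the proposition of Section~4 asserting topological equivalence of~(\ref{gradientFlowEnergies}) and~(\ref{gradientFlowStates}), also the flow~(\ref{gradientFlowStates}) and the family $\{\varphi_\varepsilon(s)\}$ --- only up to topological equivalence. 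One may also assume the critical values pairwise distinct (Proposition~\ref{handlebodies} and the remark following it). So from now on $f$ is Morse--Smale.

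With $f$ Morse--Smale, Proposition~\ref{hbprop} rebuilds $(M,\partial M)$ from $\mathrm{critical}(f)$ alone, attaching in order of increasing critical value one $\gamma(p_i)$-handle per $p_i$. Consequently the chain groups $C_k(f)$, free on the index-$k$ critical points, are fixed by $\mathrm{critical}(f)$, while the Morse boundary operators $\partial_k = \big(n(q,p)\big)$ are the finite flow-line counts defined in Section~5 (with a choice of orientation when the coefficient ring $K\neq\mathbb{Z}_2$). By the Morse homology theorem $(C_*(f),\partial_*)$ is a chain complex whose homology is $H_*(M,\partial M;K)$. Thus the combinatorial datum ``which critical points are joined by a flow line, and with what multiplicity'' is precisely the datum of the operators $\partial_*$.

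The delicate step --- and the one I expect to be the main obstacle --- is to argue that prescribing $H_*(M,\partial M;K)$ together with the groups $C_k(f)$ pins down $\partial_*$, hence the flow lines. This is not literally true at the level of matrices: rank--nullity, $\dim H_k = \dim C_k - \mathrm{rank}\,\partial_k - \mathrm{rank}\,\partial_{k+1}$, only fixes the ranks of the $\partial_k$. Here, however, $f$ lives on a piece of $\mathbb{R}^2$ and $(M,\partial M)$ is a surface, so the complex has only three terms $C_2 \to C_1 \to C_0$ with $\partial_0 = 0$; the ranks of $\partial_1$ and $\partial_2$ are then read off from $\mathrm{critical}(f)$ and $H_*(M,\partial M;K)$, and the residual ambiguity --- a change of basis of the $C_k$, i.e. a relabeling of equal-index critical points with a corresponding re-routing of flow lines --- is exactly what passing to the topological-equivalence class of the gradient flow quotients out. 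So $(\mathrm{critical}(f),H_*(M,\partial M;K))$ determines the flow~(\ref{gradientFlowEnergies}) up to topological equivalence, and one must verify that this equivalence is precisely the slack the homology leaves; this is where the low dimension of $(M,\partial M)$ and the Morse--Smale normalization do the work.

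Finally I would convert the flow-line data into the adiabatic flow. For a gradient flow the trajectories meet the level sets of $f$ orthogonally at every regular point (Section~3), so knowledge of the trajectories joining the critical points determines all level sets of~(\ref{gradientFlowEnergies}); in particular the level set $\{f(s,\lambda)=0\}$ recovers the eigenenergies $\lambda^i(s)$ of $H(s)$. By the proposition of Section~4 the flows~(\ref{gradientFlowEnergies}) and~(\ref{gradientFlowStates}) are topologically equivalent through the map $h$ of~(\ref{h}), and $g = f\circ h$ vanishes to order $O(\varepsilon^{2\gamma})$ exactly along the adiabatic solutions $|\varphi_\varepsilon(s)\rangle$; hence the adiabatic flow is the corresponding family of level sets of~(\ref{gradientFlowStates}). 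Combining with the previous step, the adiabatic flow associated with $H(s)$ is determined by $\mathrm{critical}(f)$ and $H_*(M,\partial M;K)$ (up to topological equivalence), which is Theorem~\ref{thm1}.
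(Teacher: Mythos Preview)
Your argument follows essentially the same route as the paper's proof: invoke the Morse homology theorem to pin down the boundary maps $\partial_*$ from $\mathrm{critical}(f)$ and $H_*(M,\partial M;K)$, use the Morse lemma for the local form of the flow, glue via the integers $n(p,q)$, and then recover the adiabatic solutions as level sets orthogonal to the trajectories. If anything you are more explicit than the paper about the Morse--Smale normalization and about the residual basis ambiguity in $\partial_*$ (which the paper simply asserts is ``uniquely defined''), but the architecture is the same.
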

 \begin{proof}
The boundary maps $\partial_k$
 of the Morse complex are uniquely defined by the isomorphism in the Morse homology theorem,
in addition to the set of critical points of $f$. Additionally,   Morse lemma gives $f$ and the gradient flow lines, locally around each critical
point. The number $n(p, q)$ allows us to connect these trajectories consistently. The adiabatic solutions are then level sets of   this gradient flow.
 \end{proof}
In the remainder of this section, we start our descent from the  global qualitative description  of the quantum  adiabatic evolution, given by the theorem above, to a local quantitative description around the critical points of $f$.  Essential in our local description is the curvature at the critical points.  The reason for moving to this curvature based local description is the fact that two Hamiltonians giving the same topology are not necessarily ``computationally equivalent" i.e., have the same speedup. Indeed, Gau\ss-Bonnet theorem may distribute the total curvature that comes with this topology differently around the critical points, thus, potentially yielding different ``speeds''. 

\subsection{Differential Geometry}
We first, review some important constructs in differential geometry of surfaces \cite{doCarmo}.
\subsubsection{Gau\ss\, map and its derivatives}
Let  $S$ be a surface in the Euclidean space $\mathbb R^3$ and $p$ a point in $S$. 
We write  
\begin{equation}
\mathrm{n}:S \rightarrow \mathbb S^1
\end{equation}
 for the
 Gau\ss\, map  which  sends $p$ to the normal (unit) vector $\mathrm{n}(p)$.  The rate of change of $\mathrm{n}$ on a neighbourhood of $p$ measures how rapidly the surface $S$ is pulling away from the tangent plane $T_pS$.  Formally, this rate of change is given
by the {\it shape operator} $$d(\mathrm{n})(p): T_pS  \rightarrow   T_{\mathrm{n}(p)}\mathbb S^1,$$ which is a Hermitian operator. The determinant of 
$d(\mathrm{n})(p)$, denoted $K(p)$,   is the {\it Gau\ss\, curvature} at $p$, and its   eigenvalues, denoted by $k_1(p)$ and $k_2(p)$, are the two {\it principal curvatures} at $p$. (These are generalizations, to surfaces, of the notion of the curvature of a curve. In particular, the principal curvatures are the minimum and the maximum of all curvatures of all curves on the surface passing through~$p$).  The corresponding eigenvectors, called principal directions, $e_1$ and $e_2$. They form a convenient orthonormal  basis for~the tangent plane $T_pS$ as will see next.


\subsubsection{Morse lemma revisited and Dupin indicatrix}
It is  possible  to have a local expression of any smooth function $f(s, \lambda)$ around its non degenerate critical points in terms of the principal curvatures at $p$.  This is obtained by Taylor expanding  $f$ at $p$ 
and then rotating the $s \lambda$-axes to coincide with  the principal directions $e_1$ and $e_2$.
\begin{lemma}[Morse lemma revisited]\label{MorseRevisited} 
	 For each non degenerate critical point $p$ of~$f$, there exists a neighbourhood of $p$  
	such that 
	\begin{equation}\label{approx} 
	 f(\xi, \eta)= f(p) +  \frac{1}{2} \left(k_1 (p) \xi^2 + k_2(p) \eta^2\right) + \mathrm{h.o.t}., 
         \end{equation}
	with $k_1(p)$ and $k_2(p)$ are the principal curvatures of~$S$~at~$p$.
\end{lemma}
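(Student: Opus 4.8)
The plan is to obtain the stated normal form by combining the classical Morse lemma (already stated in the excerpt) with a linear change of coordinates that diagonalizes the Hessian, and then to identify the resulting quadratic coefficients with the principal curvatures of the graph surface $S = \{(\xi,\eta, f(\xi,\eta))\}$ at the critical point $p$. First I would Taylor-expand $f$ around $p$; since $\nabla f(p)=0$, we get $f(\xi,\eta) = f(p) + \tfrac12 Q(\xi,\eta) + \mathrm{h.o.t.}$, where $Q$ is the Hessian quadratic form of $f$ at $p$. Because $p$ is non-degenerate, $Q$ is a non-singular symmetric form, so by the spectral theorem there is an orthogonal rotation of the $s\lambda$-axes bringing $Q$ to diagonal form $\mu_1 \xi^2 + \mu_2 \eta^2$ with $\mu_1,\mu_2$ the eigenvalues of the Hessian and the new axes pointing along the Hessian eigenvectors. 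This already yields the displayed formula with $\mu_i$ in place of $k_i(p)$.

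The substantive step — and the main obstacle — is to show that these Hessian eigenvalues $\mu_i$ are precisely the principal curvatures $k_i(p)$ of the graph surface at $p$, i.e. the eigenvalues of the shape operator $d\mathrm{n}(p)$. Here I would use the standard fact from differential geometry of surfaces (see \cite{doCarmo}) that for a graph $z = f(\xi,\eta)$ the second fundamental form at a point is $\mathrm{II} = \tfrac{1}{\sqrt{1+|\nabla f|^2}}\,\mathrm{Hess}(f)$ and the first fundamental form is $\mathrm{I} = \mathrm{Id} + \nabla f\,(\nabla f)^{\top}$. At a critical point $p$ we have $\nabla f(p)=0$, so $\mathrm{I}=\mathrm{Id}$ and $\mathrm{II}=\mathrm{Hess}(f)(p)$ exactly; hence the shape operator $\mathrm{I}^{-1}\mathrm{II}$ coincides with the Hessian at $p$, its eigenvalues are the $\mu_i$, and its eigenvectors $e_1,e_2$ are the principal directions. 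This is exactly why the rotation bringing the Hessian to diagonal form is the same rotation that aligns the axes with the principal directions $e_1,e_2$, which makes the change of variables in the lemma well-defined and geometrically meaningful.

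Finally I would assemble the pieces: after rotating to the principal-direction coordinates $(\xi,\eta)$, the expansion reads $f(\xi,\eta) = f(p) + \tfrac12\big(k_1(p)\xi^2 + k_2(p)\eta^2\big) + \mathrm{h.o.t.}$, which is the claimed identity \eqref{approx}. One should remark that if one wants the higher-order terms to vanish entirely (a genuine diffeomorphism normal form rather than a Taylor approximation), one invokes the Morse lemma as already stated, at the cost of replacing the linear rotation by a nonlinear diffeomorphism; but then the quadratic part is no longer literally $\tfrac12(k_1\xi^2+k_2\eta^2)$ unless one tracks the derivative of the diffeomorphism at $p$. I expect the paper intends the Taylor-expansion reading with explicit $\mathrm{h.o.t.}$, so the rotation argument suffices and the only real content is the identification $\mathrm{shape\ operator}(p) = \mathrm{Hess}(f)(p)$ at a critical point. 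I would close by noting that non-degeneracy of $p$ guarantees $k_1(p)k_2(p) = K(p) \neq 0$, consistent with $p$ being a genuine extremum or saddle as in the two-dimensional classification already recorded after the classical Morse lemma.
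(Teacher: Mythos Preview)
Your proposal is correct and follows the same approach as the paper, which simply states that the result ``is obtained by Taylor expanding $f$ at $p$ and then rotating the $s\lambda$-axes to coincide with the principal directions $e_1$ and $e_2$.'' You supply the justification the paper omits---namely, that at a critical point $\nabla f(p)=0$ forces the first fundamental form to be the identity and the second fundamental form to equal the Hessian, so the shape operator and Hessian coincide there---which is exactly the content needed to identify the diagonal entries with the principal curvatures.
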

Let $a>0$ be a positive small number. {\it Dupin indicatrix} is  the set of vectors $w$ in $T_pS$ such that 
\begin{equation}\label{dupin}
{\mathrm{II}_p(w)= \pm a},
\end{equation}
with ${\mathrm{II}_p(w)} = \langle d\mathrm{n}(p)(w), w\rangle$  the second fundamental form, which when expanded
with ${w= \xi e_1 + \eta e_2,}$ gives $k_1 (p) \xi^2 + k_2(p) \eta^2$; hence, Dupin indicatrix is a union of conics in $T_pS$.	 Lemma~\ref{MorseRevisited}  says that,  if $p$ is a non degenerate critical point, the intersection with $S$ of a plane parallel to $T_pS$ is, in a first
order approximation, a curve similar to (one of the conics of) the Dupin indicatrix at~$p$.   
  

\subsubsection{Delay factors around saddle points}
Suppose now that the  surface $S$ is the graph of the function  $f(s, \lambda)=det(H(s)-\lambda)$. By Lemma~\ref{MorseRevisited}, if $p$ is a non degenerate critical point of $f$, then  the spectrum  of $H(s)$ -- given by the curve $\{f(s, \lambda)=0\}$ --can be approximated by the Dupin indicatrix  
$\{w=(\xi,\eta)\in \mathbb R^2: \,  {\mathrm{II}_p(w)} =-f(p)\}$,  for some rotation
$(s, \lambda)\mapsto(\xi, \eta)$. If   $p$ is a saddle point    (i.e., $k_1(p) k_2(p)<0$), then the spectrum around $p$ is a set of two hyperbola  similar to those in the search problem  discussed in the Introduction. In particular, the radius of Dupin indicatrix gives     the energy difference, which in turn
 gives
the  total delay factor.

\subsubsection{Tracking the shape operator}
In light of the above, the   shape operator $d(\mathrm{n})(-)$ emerges as a 
central object   in analyzing the computation advantage of AQC.  Therefore, it is meaningful    to connect  the different locally defined shape operators around the different critical points consistently with the Morse complex; that is, to  restrict the shape operator $d(\mathrm{n})(-)$ to the network $\mathcal N\subset S$ 
consisting of the critical points and the connecting instantons. By doing so, we obtain
a fiber bundle  (a subbundle of the tangent bundle $TS$)
\begin{equation}
	\pi: \cup_{p\in \mathcal N}\left\{w \in T_pS:\,  {\mathrm{II}_p(w) =-f(p)} \right\}\rightarrow \mathcal N
\end{equation}
that captures the adiabatic evolution of Hamiltonian $H(s)$ not only topologically   but also quantitatively around its critical points -- at any saddle point $p\in \mathcal N$ the total time delay can be obtained from the Dupin indicatrix. In fact, at the vicinity of any critical point~$p$, the spectrum of the time dependent Hamiltonian $H(s)$ is completely determined (up to a rotation) by the spectrum of the hermitian  operator $d \mathrm{n}(p)$ acting on the 2-dimensional Hilbert space~$\mathbb R^2$.

\section{The Degenerate Case: Conley Theory}
We summarize here relevant results from Conley theory--although there is no meaningful Hamiltonian of interest at this time. In Conley's view,  critical points of the function $f$ are represented by the so-called index-pair which we review now -- The following definitions and theorem are taken from \cite{phdthesis}. A compact set $N\subset M$ is said to be an isolating neighborhood of the flow $\Phi_{-}$, if the set
	$
		Inv(N, \Phi_{-})  = \{x\in M|\,  \Phi_\tau(x) \in N \mbox{ for all } \tau \in \mathbb R \}
	$
	is contained in the interior of $N$. An invariant set $S$ is an isolated invariant set
	if there exists an isolating neighborhood $N$ such that $Inv(N, \Phi_{-})=S$. Given
    an isolated invariant set $S$, 
an index pair is  a pair $(N, L)$ where $N$ is an isolating neighborhood, and $L$ is an exit set -- Figure \ref{conleyfig1}. 

  \begin{figure}[h]
    \centering
    \subfloat   
    {{\includegraphics[width=4.8cm]{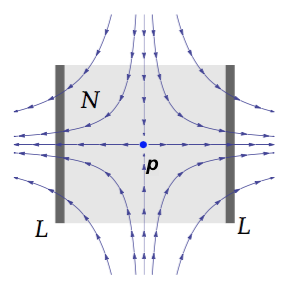} }}%
    \qquad
        \centering
    \subfloat  
    {{\includegraphics[width=5.5cm]{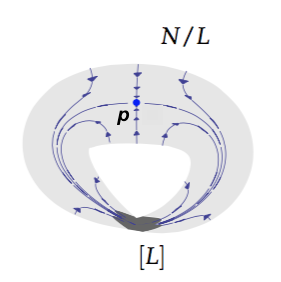} }}%
    \caption{ (Left) An index pair for the saddle point $p$. (Right) The Conley index is the homotopy type of the sphere obtained by collapsing $L$ to a point  (Images by T.O. Rot). }
    \label{conleyfig1}
  \end{figure} 
  
\begin{theorem}[C. Conley]
Let $S$ be an isolated invariant set of the flow $\Phi_{-}$. Then
\begin{itemize}
\item The set $S$ admits an index pair $(N, L)$.
\item The Conley index is the pointed homotopy type $(N\backslash L, [L])$, and is independent of the choice of index pair.
\item For any homotopy of flows $\Phi_{-}^\mu$ (with $\mu\in [0,1]$) such that $N$ is an isolating neighbourhood for all flows $\Phi_{}^\mu$, the Conley index of $Inv(N, \Phi_{-}^\mu)$ remain unchanged for all~$\mu$. 
\end{itemize}
\end{theorem}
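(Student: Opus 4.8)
I would prove the three assertions in the order stated, since each relies on the one before: (i) every isolated invariant set admits an index pair; (ii) the pointed homotopy type $(N\backslash L,[L])$ does not depend on the chosen pair; (iii) this common homotopy type is unchanged under admissible homotopies of flows. Throughout, the only structural inputs are the continuity of the flow $\Phi_{-}$, compactness of the manifold (or of the relevant sets), and the \emph{isolating} hypothesis $\mathrm{Inv}(N,\Phi_{-})\subset\mathrm{int}(N)$, which is what rules out pathological orbits lingering on the boundary.

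\textbf{Existence of an index pair.} Start from a compact isolating neighbourhood $N_0$ of $S$, so $\mathrm{Inv}(N_0,\Phi_{-})=S\subset\mathrm{int}(N_0)$. First I would shrink it: by compactness of $S$ and continuity of $\Phi_{-}$ there is a smaller compact neighbourhood $N$ of $S$ with $N\subset\mathrm{int}(N_0)$ enjoying the \emph{no-return} property — any orbit segment that starts in $N$, leaves $N_0$, and later re-enters $N$ must have left $N_0$ completely in between; this is precisely where the isolating hypothesis is used, since otherwise one could splice such segments into a full orbit trapped in $N_0\backslash\mathrm{int}(N)$. On such an $N$ put
$$
L=\overline{\{\,x\in N:\ \Phi_{[0,\varepsilon)}(x)\not\subseteq N\ \text{for every}\ \varepsilon>0\,\}},
$$
the closure of the immediate–exit set. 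One then verifies the three index-pair axioms: $L$ closed (by construction), $L$ positively invariant relative to $N$, and $L$ an exit set for $N$; the last two follow from the no-return property together with a uniform bound — obtained by compactness — on the time every escaping orbit needs to leave $N$. (An alternative is to build $(N,L)$ from a Lyapunov function adapted to $S$, which makes positive invariance automatic; I would carry out the direct construction and mention this as a variant.) \emph{I expect this step to be the main obstacle}: the naive immediate–exit set is not closed, and its closure can acquire spurious points, so controlling it is exactly what forces the careful choice of $N$ and the uniform escape-time estimate.

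\textbf{Independence of the index pair.} Given two index pairs $(N_1,L_1)$ and $(N_2,L_2)$ for $S$, I would use the classical flow-map comparison. Since the $N_i$ are isolating neighbourhoods of $S$, so is $N_1\cap N_2$, and the construction above produces a third index pair nested suitably inside both; this reduces matters to the case of a nested pair $(N',L')$ inside $(N,L)$ with $N'\backslash L'\subseteq\mathrm{int}(N\backslash L)$. For large $T$, the assignment sending $[x]\in N/L$ to $[\Phi_T(x)]$ when $\Phi_{[0,T]}(x)\subseteq N\backslash L$ and to the base point otherwise is well defined and continuous — an orbit that stays near $S$ long enough has already entered $N'$, while an orbit destined to leave has already been collapsed in $N/L$. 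This map and the inclusion $N'/L'\hookrightarrow N/L$ are pointed homotopy inverses, the homotopies being the flow-time homotopies $[x]\mapsto[\Phi_t(x)]$, $t\in[0,T]$. Hence all index-pair quotients for $S$ are pointed-homotopy equivalent, so the Conley index is well defined.

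\textbf{Homotopy invariance.} Form the parametrised flow $\tilde\Phi$ on $M\times[0,1]$ by $\tilde\Phi_{\tau}(x,\mu)=(\Phi^{\mu}_{\tau}(x),\mu)$. Since $N$ isolates for every $\mu$ and $[0,1]$ is compact, $N\times[0,1]$ is an isolating neighbourhood for $\tilde\Phi$, with isolated invariant set $\tilde S=\bigcup_{\mu}\mathrm{Inv}(N,\Phi^{\mu}_{-})\times\{\mu\}$. Apply the existence step to $\tilde S$; arranging — as one may, by performing the construction over a short parameter interval $[\mu_0-\delta,\mu_0+\delta]$ and taking products in the $\mu$-direction — that the index pair is locally a product, the $\mu$-slice $(\tilde N_{\mu},\tilde L_{\mu})$ is an index pair for $\mathrm{Inv}(N,\Phi^{\mu}_{-})$, and the slice quotients over such an interval are mutually pointed-homotopy equivalent (via the product structure together with the flow map of the previous step). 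Thus $\mu\mapsto$ (homotopy type of $\tilde N_{\mu}/\tilde L_{\mu}$) is locally constant on the connected interval $[0,1]$, hence constant; combined with part (ii) this gives that the Conley index of $\mathrm{Inv}(N,\Phi^{\mu}_{-})$ is independent of $\mu$. The only delicate point here is securing the product structure near each slice, but that is a technical refinement of the existence construction rather than a new idea.
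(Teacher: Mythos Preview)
The paper does not prove this theorem at all: it is stated as a quoted result, with the prefatory remark ``The following definitions and theorem are taken from \cite{phdthesis}.'' There is therefore no proof in the paper against which to compare your attempt.

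That said, your outline is the standard route through Conley's theorem (existence of index pairs via an exit-set construction on a suitably shrunk isolating neighbourhood; well-definedness via flow-induced maps between quotients of nested index pairs; continuation via the product flow on $M\times[0,1]$ and local constancy of the index). This is essentially the argument one finds in the references the paper defers to, so your approach is appropriate and there is no substantive divergence to discuss.
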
  
An interesting case is  the case of $k$-fold saddles which are the ``degenerate analogue" of saddle points.  Precisely, a $k$-fold saddle is a critical point at which $f$ is locally given by $\mathscr{R}e \left( \left(x + i y \right) \right)^{k+1}$ with $x,y\in \mathbb R$. An example is depicted in Figure  \ref{conleyfig2}. The  function $f(x, y)=x^3 - 3x y^2$, which can be a characteristic function of some Hamiltonian $H$,  is perturbed into the Morse function $f_\varepsilon(x, y)=x^3 - 3x y^2+ \varepsilon x$. The degenerate critical points bifurcates into two saddle points without affecting the dynamics around, and the results presented in the non degenerate case, can be extended   to this type of degeneracy.

   \begin{figure}[h]
    \centering
    \subfloat   
    {{\includegraphics[width=12cm]{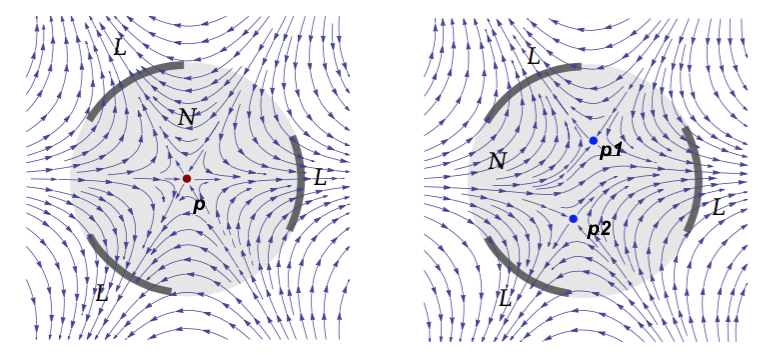} }}%
    \caption{The gradient flow lines of the function $f(x, y)=x^3 - 3x y^2$. Perturbing the critical point $p$ into two saddle points doesn't change the behaviour of the surrounding flow lines  (Images by T.O. Rot).}
    \label{conleyfig2}
  \end{figure}  
 
\section{The case of the ferromagnetic $p-$spin model}
{In this last section,  we pay attention to the  instructive model of the  ferromagnetic $p-$spin. We consider  the time dependent Hamiltonian 
\begin{equation}
	H(s) = -s n\left ( \frac{1}{n} \sum_{i=1}^n \sigma_i^z\right)^p - (1-s) \sum_{i=1}^n \sigma_i^x. 
\end{equation} 
The $p-$th power refers to the matrix power; i.e., composition of linear operators on the Hilbert space ${\mathbb C^2}^{\otimes n}$. This Hamiltonian exhibits a first order quantum phase transition (QPT) for $p\geq 3$;  this can be shown in various ways--for instance,  by taking the thermodynamical limit $n=\infty$, in which case one gets  a classical (continuous) spin of a single particle. Figure \ref{thermo} shows
the continuity and discontinuity of the first derivative of the classical energy $e(p,s)$ when $p=2$ and $p=5$ respectively. 

 \begin{figure}[h]
    \centering
    \subfloat  []
    {{\includegraphics[width=4cm]{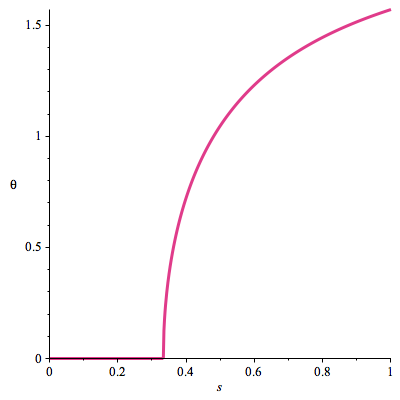} }}%
        \centering
    \subfloat  [ ]
    {{\includegraphics[width=4cm]{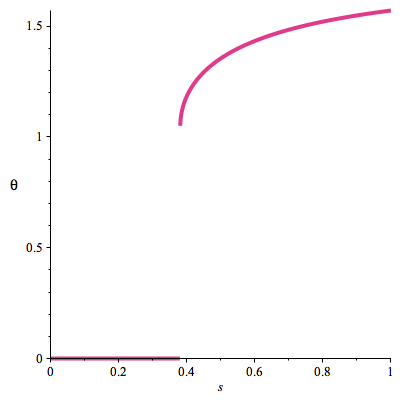} } }%
    \\
      \subfloat  [ ] 
    {{\includegraphics[width=4cm]{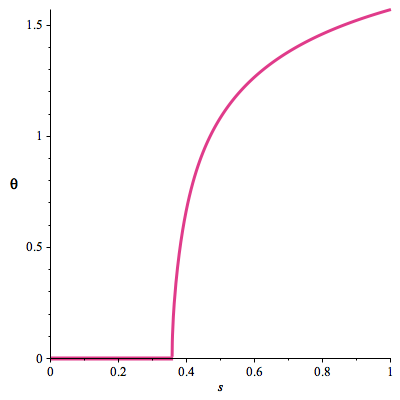} } }%
    \caption{ (a): The minimum of the energy $e(p=2)$ is continuous function of $s$; this minimum is  given by $\theta=\pi -\arccos \left({\frac {-1+s}{2s}} \right)$. The phase transition is second order. (b):
    The minimum of the energy $e(p=5)$ is discontinuous at $s=0.38$. The phase transition is first order. (c) After the addition of the non-stoquastic term with $b =0.1$, the  minimum of the energy is now a continuous function of~$s$. }
    \label{thermo}
  \end{figure}  
 ~\\
For finite $N$, the system is also reduced into a much smaller space because of the commutativity of $H(s)$ with the total spin 
$S$. This reduction can be generalized using the theory of irreducible representations, but here we recall the usual reduction.  Conveniently, the  
 lowest part of the spectrum of the total operator is also the lowest part of the spectrum of $H(s=0)$. And because of the anti-crossing theorem, this is also true for $H(s)$  for $0<s<1$. This lowest part is given by the states with maximum spin $n/2$ i.e.,  states  $|n/2, m\rangle$, for 
 $0\leq m\leq n$,  and defined with 
\begin{eqnarray} 
	S_z |n/2, m\rangle  &=& m |n/2, m\rangle\\ 
	S^2 |n/2, m\rangle  &=& n/2 (n/2+1) |n/2, m\rangle. 
\end{eqnarray}
In addition we have
\begin{equation}
	S_{\pm} |n/2, m\rangle =  \sqrt{n/2(n/2+1) - m(m\pm1)  } \,  |n/2, m \pm 1\rangle
\end{equation}
where $S_{\pm} =S_x \pm S_y$ are the spin raising the lowering operators. All of this  continues to hold with 
the new non-stoquastic Hamiltonian  (\cite{10.3389/fict.2017.00002}): 
\begin{equation}
	H^b(s) =  -s b n\left ( \frac{1}{n} \sum_{i=1}^n \sigma_i^z\right)^p
	+ s(1-b) n  \left( \frac{1}{n} \sum_{i=1}^n \sigma_i^x\right)^k
	 - (1-s) \sum_{i=1}^n \sigma_i^x.  
\end{equation}
The parameter $b$ controls the {\it stoquasticity} of the system \cite{Bravyi:2008:CSL:2011772.2011773}. The relevance of this additional parameter, shown in \cite{10.3389/fict.2017.00002}, is that when $b$ is close to zero (the system is   more and more non-stoquastic),  the first oder QPT changes into a second order QPT (this has important ramifications on the computational power of this particular AQC). For us,  this change, turns out to be, a homotopy:
\begin{equation}\label{h}
h:(M^0, \partial M^0)\times [0, 1] \rightarrow (M^1,\partial M^1),    
\end{equation}
with 
\begin{equation}
h(s, \lambda, b) = f^b(s, \lambda)= det(H^b(s) -\lambda I),    
\end{equation}
 which gives the commutative diagram:
 \begin{center}
\begin{tikzcd}
f^0 \arrow[to=Z, "{h(- ,b)}"] \arrow[to=2-2]
& {} \\
& \qquad  \qquad    H_*(M^b,\partial M^b; \mathbb Z)=H_*(M^0,\partial M^0; \mathbb Z)\\
|[alias=Z]|  f^b  \arrow[to=2-2]
\end{tikzcd}
\end{center}
Explicitly, figure \ref{dance} shows how critical points are born and dead (by pairs)
as $b$ changes without affecting the Euler characteristic. As the parameter  $b$ decreases towards zero, the number of critical points increases, forcing the total curvature (under the constraint of the Gau\ss--Bonnet theorem) to redistribute itself on a larger set of critical points, which is   behind the change of the first order QPT into a second order \cite{1903.01486}. 
}

\begin{figure}[H]
    \centering
    \subfloat  []
    {{\includegraphics[scale=0.2]{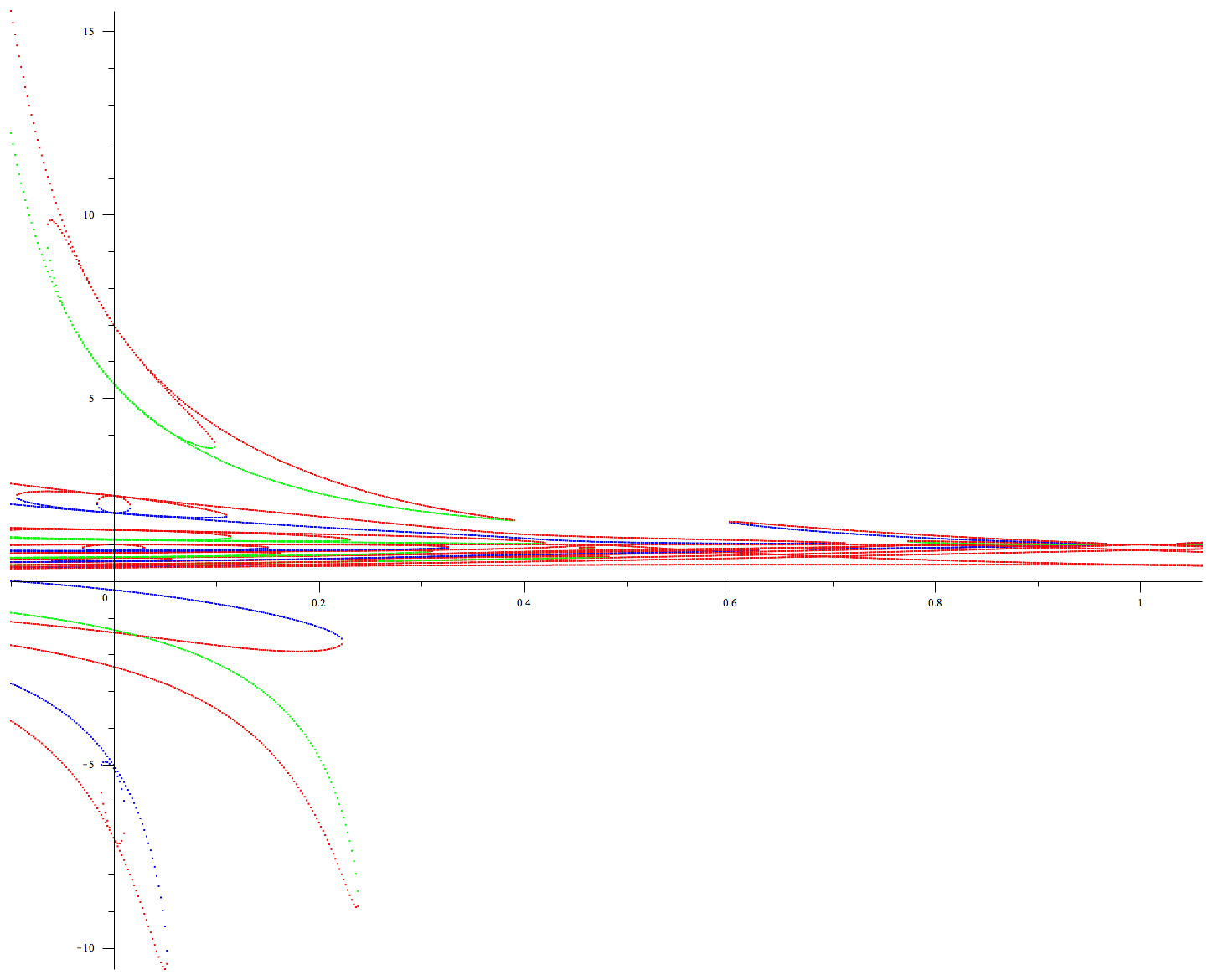} }}%
        \centering
    ~~\\
      \subfloat  [ ] 
    {{\includegraphics[scale=0.5]{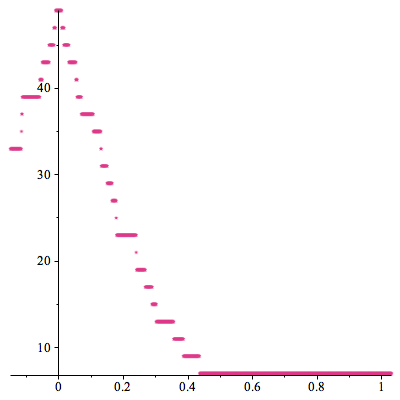} } }%
    \caption{ (a): The evolution of the critical points for $p=5$ and $n=7$. In red are the saddle points (anti-crossings). In green, are the maxima, and in blue, the minima.  Euler characteristic  is constant  $\chi=-7$ for all $b$ in $\mathbb R$. (b): 
    The number of critical points. This number is minimized at $b =1$ and maximized at $b =0$.}
    \label{dance}
  \end{figure}


%

\section{Conclusion}  
In this paper we have presented a topological (qualitative) and geometrical (quantitative) description of the quantum  adiabatic evolution for finite dimensional Hamiltonians. The topological description, based on gradient flows and Morse homology, gives the global picture of the evolution: critical points and how they connect to each other. The differential geometry description uses Gau\ss-Bonnet to  zoom in, consistently with the topology, around the critical points to obtain the delay factors in the quantum adiabatic evolution.  This global-to-local procedure can potentially serve as the foundation for a practical tool for designing efficient Hamiltonians for adiabatic quantum computations.

~~\\
It is comforting,  remarkable and,  appealing, to see that the  mathematics we have employed fit naturally with the physics. It is our modest hope that our approach opens up new avenues for studying quantum speedup in adiabatic quantum computations.  

 \section*{Acknowledgements}
 We thank Dejan Slepcev of CMU for providing feedback on an earlier version of this paper, and members of the NASA QuAIL group (especially Eleanor Rieffel, Bryan O'Gorman, Davide Venturelli, and David Bell) for hosting us at USRA. 

\bibliography{c}

\end{document}